\newcommand{\N}{\mathbb{N}}
\newcommand{\R}{\mathbb{R}}
\newcommand{\C}{\mathbb{C}}
\newcommand{\mc}[1]{\mathcal{#1}}
\newcommand{\p}{\partial}
\newcommand{\e}{\varepsilon}
\newcommand{\dl}{\delta}
\newcommand{\pphi}{\varphi}
\newcommand{\ord}{{\mathcal O}}
\newcommand{\cW}{{\mathcal W}}
\newcommand{\ope}[1]{\operatorname{#1}}
\newcommand{\til}[1]{{\widetilde{#1}}}
\newcommand{\be}{\begin{equation}}
\newcommand{\ee}{\end{equation}}
\newcommand{\ben}{\begin{equation*}}
\newcommand{\een}{\end{equation*}}
\newcommand{\dip}{\displaystyle}
\newcommand{\qtext}[1]{\quad\text{#1 }\ }
\newcommand{\dir}{{\bullet}}
\newcommand{\inc}{{\flat}}
\newcommand{\out}{{\sharp}}
\theoremstyle{theorem}
\newtheorem{theorem}{Theorem}
\newtheorem{lemma}{Lemma}[section]
\newtheorem{proposition}[lemma]{Proposition}
\newtheorem{remark}[lemma]{Remark}
\numberwithin{equation}{section}
\newcounter{Condition}
\newtheorem{condition}[Condition]{Condition}
\subjclass[2020]{81Q20 (Primary) 35B40, 35Q40 (Secondary).}
\keywords{Landau-Zener formula; local scattering matrix; degenerate avoided-crossing.}
\title[Local scattering matrix for a degenerate avoided-crossing]{
Local scattering matrix for a degenerate avoided-crossing in the non-coupled regime
}
\author{Kenta Higuchi}
\address[Kenta Higuchi]{Graduate School of Science and Engineering, Ehime University/ Bunkyocho 3, Matsuyama, Ehime, 790-8577, Japan.
 e-mail: higuchi.kenta.vf@ehime-u.ac.jp}
\date{}
\begin{document}

\maketitle
\begin{abstract}
A Landau-Zener type formula for a degenerate avoided-crossing is studied in the non-coupled regime. More precisely, a $2\times2$ system of first order $h$-differential operator with $\mathcal{O}(\varepsilon)$ off-diagonal part is considered in 1D. 
Asymptotic behavior as $\e h^{m/(m+1)}\to0^+$ of the local scattering matrix near an avoided-crossing is given, where $m$ stands for the contact order of two curves of the characteristic set.
A generalization including the cases with vanishing off-diagonals and non-Hermitian symbols is also given.

\end{abstract}

\section{Introduction}
\subsection{Background}
We study the following system with small parameters $\e,h>0$ in 1D:
\be\label{eq:System}
P(\e,h)w:=(hD_x\otimes \ope{Id}_2+H(\e))w=0,\quad
H(\e)=\begin{pmatrix}V_1(x)&\e\\\e&V_2(x)\end{pmatrix},\quad D_x=-i\frac{d}{dx}.
\ee
For a linear function $V_1(x)=-V_2(x)=vx$ $(v>0)$, this problem was introduced in the context of the adiabatic theory \cite{La,Ze}. In their model, the variable $x$ is interpreted as time, the eigenvalues $\pm\sqrt{V(x)^2+\e^2}$ of $H(\e)$ as possible energies, and the small parameters $\e$ and $h$ as an interaction between two energies and time scale, respectively. They computed the transition probability $\exp(-\frac{\pi}{v} \e^2h^{-1})$ between two possible energies while the time $x$ varies from $-\infty$ to $+\infty$. This is the well-known Landau-Zener formula. 

The local scattering matrix introduced in \cite{CdVLP} is a microlocal object which makes this global problem on the transition probability into a microlocal problem. They consider a general situation for matrix pseudodifferential operators with a transversal crossing. For our model operator $P(h,\e)$,
the diagonal matrix-valued function
\be\label{eq:p0}
p_0(x,\xi)=\ope{diag}(\xi+V_1(x),\xi+V_2(x))=:\ope{diag}(p_1(x,\xi),p_2(x,\xi))
\ee
is the symbol 
when $\e=0$. There are two curves $\Gamma_1$ and $\Gamma_2$ defined by $\Gamma_j=p_j^{-1}(0)=\{(x,\xi)\in T^*\R;\,\xi+ V_j(x)=0\}$ $(j=1,2)$. They cross at $(x_0,-V_0)$ provided that $V_1(x_0)=V_2(x_0)=V_0$ for an $x_0\in\R$. The crossing is transversal if  $V_1'(x_0)\neq V_2'(x_0)$. On each curve $\Gamma_j$, we consider the Hamiltonian flow induced by the classical Hamiltonian $p_j$, that is, the flow of the Hamiltonian vector field in $T(T^*\R)$:
\be
H_{p_j}=\frac{\p p_j}{\p \xi}\p_x-\frac{\p p_j}{\p x}\p_\xi=\p_x-V_j'(x)\p_\xi.
\ee
The local scattering matrix describes the behavior of a (microlocal) solution on the outgoing curves $\gamma_{1,r}$ and $\gamma_{2,r}$ from the crossing point $(x_0,-V_0)$ in terms of that on the incoming curves $\gamma_{1,\ell}$ and $\gamma_{2,\ell}$, where we put $\gamma_{j,\ell}=\Gamma_j\cap\{x<x_0\}$ and $\gamma_{j,r}=\Gamma_j\cap\{x>x_0\}$.  
The transition probability, in particular the Landau-Zener formula, appears as the square of the modulus of the diagonal part of this unitary $2\times2$-matrix \cite[Formula (9)]{CdV3}, \cite[Formula (8)]{CdVLP}. Similar objects to this matrix have been applied to study the two-level adiabatic transition probability with several avoided-crossings \cite{Jo2,JMP,Wa,WaZe}, behavior of the (global) scattering matrix \cite{ABA} and asymptotic repartition of eigenvalues and resonances \cite{AsFu,FMW3,Hi1} of coupled Schr\"odinger operators, etc. 

Our main objective in this manuscript is to compute the local scattering matrix in the case that the two curves $\Gamma_1$ and $\Gamma_2$ contacts tangentially, that is, $V_1'(x_0)=V_2'(x_0)$. Such a problem has been considered in the \textit{coupled regime} (named by \cite{CdVLP}) in \cite{Wa}, that is, the parameter $h$ is sufficiently small compared to $\e$. We consider the opposite regime named \textit{non-coupled regime}. Non-coupled regime is studied for transversal crossings in \cite{CdV3,CdVLP,WaZe}, where the regimes are also called \textit{adiabatic regime} and \textit{non-adiabatic regime} in \cite{WaZe}. Let us make precise the notion of regimes. In the transversal case, the transition probability is $\exp(-c\e^2h^{-1})$ with a positive $c>0$ \cite{CdVLP,Jo2} like the Landau-Zener formula. This is exponentially small when $\e^2\gg h$, the coupled regime for this case, whereas it admits $1+\ord(\e^2h^{-1})$ when $\e^2\ll h$, the non-coupled regime. The critical rate of the exponent may vary for the cases with a tangential crossing. In \cite{Wa}, the regime that $\e^{m+1}\gg h^m$ is studied, where $m$ stands for the contact order, that is,
\be\label{eq:contact-o}
V_1^{(k)}(x_0)-V_2^{(k)}(x_0)=0\quad(0\le k\le m-1),\qquad
V_1^{(m)}(x_0)-V_2^{(m)}(x_0)\neq0.
\ee
They obtain the exponentially small transition probability $\exp(-c \e^{\frac{m+1}m}h^{-1})$ with $c>0$. 
In this manuscript, we show asymptotic behavior of the local scattering matrix in the regime that $\e^{m+1}\ll h^m$. We conclude from these results that $\e^{m+1}\gg h^m$ and $\e^{m+1}\ll h^m$ are coupled and non-coupled regimes for this situation. 

A problem with the tangential crossing has been also studied by the author with his collaborators for a model of coupled Schr\"odinger operators \cite{AFH1,AFH2}. 
They 
showed asymptotics of the local scattering matrix and resonances. 
Our first order model is simpler than theirs, however, asymptotic formulae for the local scattering matrix for our model and for that for their model can be written in a single formula in terms of the symbol (see Formulae \eqref{eq:Common-T1} and \eqref{eq:Common-T2}).
This suggests a possibility for a generalization as well as the transversal settings  (see \cite{CdV,CdV3} for a matrix normal form, and \cite{AsFu} for an application 
of the scalar normal form via a reduction of \cite{HeSj}).
However, our proof is based on the peculiarity of the model (sum of a function of $x$ and a polynomial of $\xi$), and can not be generalized immediately.

In this section, we state our main result (Subsection~\ref{Sec:Result}), discuss on the consistency with other studies (Subsection~\ref{Sec:Consistent}), and explain the plan of the manuscript (Subsection~\ref{Sec:Plan}).

\subsection{Assumptions and main result}\label{Sec:Result}
We here state our results. We will make precise in Subsection~\ref{sec:Terminologies} the terminologies of semiclassical and microlocal analysis used for stating the results. 
We consider the system \eqref{eq:System} locally in an interval $I=]x_\ell,x_r[\subset\R$ near $x=0$, where $\e,h$ are positive parameters,  and $V_1$, $V_2$ are smooth $(C^\infty)$ functions satisfying the following conditions:
\begin{condition}\label{C1}
The functions $V_1$, $V_2$ are real-valued.
$V_1-V_2$ vanishes only at $x=0$. The vanishing order there is finite.
\end{condition}

Put $V_0:=V_1(0)=V_2(0)$ and $\rho_0:=(0,-V_0)$. 
Let us denote by $m\ge1$ the vanishing order at $x=0$ of $V_1-V_2$ (see \eqref{eq:contact-o}). 
For $k\in\N=\{0,1,2,\ldots\}$, put
\be\label{eq:mu-k}
\mu_k=\mu_k(\e,h)
=\e h^{-\frac k{k+1}}.
\ee
Note that for $0<h<1$, $k_1>k_2$ implies $\mu_{k_1}>\mu_{k_2}$. 
The function $p_0$ given by \eqref{eq:p0} is the principal symbol of $P$ if $\mu_1\to0$ as $h\to0^+$, in the sense that the semiclassical wavefront set of a microlocal solution is contained in the characteristic set $\Gamma=\{(x,\xi)\in T^*\R;\,\det p_0(x,\xi)=0\}$. 
Note that in $I_x\times\R_\xi\subset T^*\R_{(x,\xi)}$, the kernel of the matrix $p_0(x,\xi)$ is zero, one, and two-dimensional at each point of $\Gamma^c$, $\Gamma\setminus\{\rho_0\}$, and $\{\rho_0\}$, respectively. 
This coincides with the dimension of the space of microlocal solutions near each point. It is well-known for each point of $\Gamma^c$ and $\Gamma\setminus\{\rho_0\}$ (see Subsection~\ref{sec:Terminologies}), but not for $\{\rho_0\}$ (see Proposition~\ref{prop:2dim}).  We divide $\Gamma\setminus\{\rho_0\}$ into four parts (see Figure~\ref{Fig:CS})
\ben
\Gamma\setminus\{\rho_0\}=\gamma_{1,\ell}\cup\gamma_{1,r}\cup\gamma_{2,\ell}\cup\gamma_{2,r},
\een
with
\ben
\gamma_{j,\ell}=\Gamma_j\cap\{x<0\},\quad\gamma_{j,r}=\Gamma_j\cap\{x>0\},
\quad
\Gamma_j=\{(x,\xi)\in T^*\R;\,p_j(x,\xi)=0\}\quad(j=1,2).
\een
\begin{figure}
\centering
\includegraphics[bb=0 0 371 198, width=8cm]{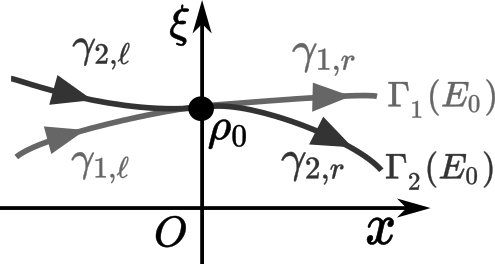}
\caption{Characteristic set of $p_0$ near the crossing point}
\label{Fig:CS}
\end{figure}
On each $\gamma_{j,\dir}$, the system \eqref{eq:System} is reduced to a single equation, and the space of microlocal solutions near $\gamma_{j,\dir}$ is one-dimensional. 
Moreover, there exists a function $f_{j,\dir}$ which spans the space and admits the asymptotic expansion
\be\label{eq:mlBasis-f}
f_{j,\ell}=e^{-i\int_0^xV_j(y)dy/h}
\begin{pmatrix}a_{j,\ell}\\
b_{j,\ell}
\end{pmatrix}\quad\text{for }\ x<0,\quad
f_{j,r}=
e^{-i\int_0^xV_j(y)dy/h}
\begin{pmatrix}a_{j,r}\\
b_{j,r}
\end{pmatrix}\quad\text{for }\ x>0,
\ee
with smooth functions $a_{j,\dir}$ and $b_{j,\dir}$ whose asymptotic behavior is given by
\be\label{eq:mlBasis-fab}
\begin{aligned}
&a_{1,\dir}=1+\ord(\mu_1^2),
&&b_{1,\dir}=\ord(\e),\\
&a_{2,\dir}=\ord(\e),
&&b_{2,\dir}=1+\ord(\mu_1^2),
\end{aligned}\qquad(\dir=\ell,r).
\ee
\begin{remark}
Such a function $f_{j,\dir}$ satisfying \eqref{eq:mlBasis-f} and \eqref{eq:mlBasis-fab} is unique up to an error of $\ord(\mu_1^2)$. More precisely, if $f_{j,\dir}$ and $\til{f}_{j,\dir}$ are two functions which satisfy the conditions, then there exist $C=C(\e,h)$ bounded as $h\to0^+$ such that 
\be\label{eq:f-modulo}
\til{f}_{j,\dir}=(1+\mu_1^2C)f_{j,\dir}.
\ee 
Note that when $\e=h$, $f_{j,\dir}$'s  are the standard microlocal solutions:
\ben
f_{1,\dir}=\exp\left(\frac1{ih}\int_0^x V_1(y)dy\right)
\begin{pmatrix}1+\ord(h)\\\ord(h)\end{pmatrix},\qquad
f_{2,\dir}=
\exp\left(\frac1{ih}\int_0^x V_2(y)dy\right)
\begin{pmatrix}\ord(h)\\1+\ord(h)\end{pmatrix}.
\een
\end{remark}

Let $\psi$ be a microlocal solution to \eqref{eq:System} near $\rho_0$. 
Since the space of the microlocal solutions near each $\gamma_{j,\dir}$ $(j,\dir)\in\{1,2\}\times\{\ell,r\}$ is one-dimensional, there exist constants $\alpha_{1,\ell}$, $\alpha_{2,\ell}$, $\alpha_{1,r}$ and $\alpha_{2,r}$ such that
\be\label{eq:mlCoeff}
\psi\equiv \alpha_{j,\dir}f_{j,\dir}\qtext{microlocally near}\gamma_{j,\dir}.
\ee
Then Proposition~\ref{prop:2dim}, which states under Condition~\ref{C1} that the space of microlocal solutions near $\rho_0$ is two-dimensional when $\mu_m$ is sufficiently small, implies that there exists a $2\times2$-matrix such that 
\be\label{eq:Def-T}
\begin{pmatrix}\alpha_{1,r}\\\alpha_{2,r}\end{pmatrix}
=T\begin{pmatrix}\alpha_{1,\ell}\\\alpha_{2,\ell}\end{pmatrix}.
\ee
Note that the functions $f_{j,\dir}$ are characterized up to $\ord(\mu_1^2)$ by \eqref{eq:mlBasis-f} and \eqref{eq:mlBasis-fab}, hence the constants $\alpha_{j,\dir}$ and the matrix $T$ are also. More precisely, let $t_{jk}$ be the $(j,k)$-entry of $T$ for a choice of $(f_{1,\ell},f_{2,\ell},f_{1,r},f_{2,r})$. Then for another choice, the $(j,k)$-entry $s_{jk}$ of $T$ satisfies 
\be\label{eq:ess-error2}
s_{jk}=t_{jk}(1+\ord(\mu_1^2)).
\ee
This is a straightforward consequence of \eqref{eq:f-modulo}.

\begin{theorem}\label{thm:Asymptot-T}
Assume Condition~\ref{C1} with the vanishing order $m\ge2$. Then there exist $h_0>0$ and $\mu^0>0$ such that $T=T(h,\e)$ admits 
\be\label{eq:Asym-T1}
T=\begin{pmatrix}
1+\ord(\mu_m^2)
&t_{12}
\\t_{21}
&1+\ord(\mu_m^2)
\end{pmatrix}
\ee
with
\be\label{eq:t12-t21}
t_{12}=-i\mu_m(\omega_m(h)+\ord(\mu_m)),
\quad
t_{21}=-i\mu_m(\overline{\omega_m(h)}+\ord(\mu_m)),
\ee
uniformly for $0<\mu_m<\mu^0$ and $0<h<h_0$. 
Here, the asymptotic behavior of $\omega_m(h)$ is given by
\be\label{eq:omega-m}
\omega_m(h)=2\eta_m\left(\frac{(m+1)!}{\bigl|V_1^{(m)}(0)-V_2^{(m)}(0)\bigr|}\right)^{\frac1{m+1}}\bm{\Gamma}\left(\frac{m+2}{m+1}\right)+\ord(h^{\frac1{m+1}}),
\ee
where $\bm{\Gamma}$ stands for the Gamma function, and 
\be
\eta_m=\left\{
\begin{aligned}
&\cos\left(\frac\pi{2(m+1)}\right)&&\text{when $m$ is even,}\\
&\exp\left(\ope{sgn}(V_1^{(m)}(0)-V_2^{(m)}(0))\frac{i\pi}{2(m+1)}\right)&&\text{when $m$ is odd.}
\end{aligned}
\right.
\ee
Moreover, $T$ is unitary for a suitable choice of microlocal solutions $(f_{1,\ell},f_{2,\ell},f_{1,r},f_{2,r})$.
\end{theorem}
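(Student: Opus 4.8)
The plan is to exploit the special structure of $P$ --- a sum of $hD_x$ and a function of $x$ --- to reduce \eqref{eq:System} to an exactly solvable coupled scalar ODE in which $\mu_m$ is manifestly the coupling constant, and then to run a Born expansion.

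\textbf{Reduction and unitarity.} First I would set $\phi_j(x)=\int_0^xV_j(y)\,dy$ and substitute $w_j=e^{-i\phi_j/h}u_j$, which turns \eqref{eq:System} into
\be\label{eq:u-eq}
\frac{d}{dx}\begin{pmatrix}u_1\\u_2\end{pmatrix}=-\frac{i\e}{h}R(x)\begin{pmatrix}u_1\\u_2\end{pmatrix},\qquad
R(x)=\begin{pmatrix}0&e^{i\Phi(x)/h}\\e^{-i\Phi(x)/h}&0\end{pmatrix},
\ee
where $\Phi=\phi_1-\phi_2=\int_0^x(V_1-V_2)$ vanishes to order $m+1$ at $0$ and has non-vanishing derivative elsewhere on $I$ by Condition~\ref{C1}. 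Since $-iR$ is anti-Hermitian, the transition matrix of \eqref{eq:u-eq} between any two points of $I$ is unitary, which will yield the unitarity assertion. Rescaling $x=h^{1/(m+1)}t$ turns \eqref{eq:u-eq} into $\dot u=-i\mu_m R_h(t)u$ on $t\in[T_\ell,T_r]$ with $T_\dir=h^{-1/(m+1)}x_\dir\to\pm\infty$, where $R_h(t)\to R_\infty(t)=\begin{pmatrix}0&e^{ic_mt^{m+1}}\\e^{-ic_mt^{m+1}}&0\end{pmatrix}$ uniformly on compact $t$-sets with $\ord(h^{1/(m+1)})$ error, and $c_m=(V_1^{(m)}(0)-V_2^{(m)}(0))/(m+1)!$. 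Because \eqref{eq:System} is an ODE system, a microlocal solution $\psi$ near $\rho_0$ is the microlocalization of an exact solution on $I$; choosing the $f_{j,\dir}$ to be the gauge transforms of the solutions of \eqref{eq:u-eq} with Cauchy data the $j$-th coordinate vector at $x_\dir$ (a valid choice in the sense of \eqref{eq:mlBasis-f}--\eqref{eq:mlBasis-fab}, so that $T$ is pinned down up to \eqref{eq:ess-error2}), one checks that reading off $\alpha_{j,\dir}$ from \eqref{eq:mlCoeff} amounts to evaluating the corresponding $u$ at $x_\dir$ modulo $\ord(h^\infty)$. Hence $T$ equals, up to $\ord(h^\infty)$, the transition matrix of $\dot u=-i\mu_m R_h u$ from $T_\ell$ to $T_r$.

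\textbf{Born expansion.} Next I would expand this transition matrix as a Volterra series $T=I+\sum_{k\ge1}(-i\mu_m)^kT^{(k)}$, $T^{(k)}=\int_{T_\ell<t_1<\dots<t_k<T_r}R_h(t_k)\cdots R_h(t_1)\,dt_1\cdots dt_k$. As $R_h$ is off-diagonal, $T^{(k)}$ is diagonal for even $k$ and anti-diagonal for odd $k$. With suitable uniform bounds on $\|T^{(k)}\|$ (discussed below), the series converges for $\mu_m<\mu^0$, one has $\sum_{k\ge2}\mu_m^k\|T^{(k)}\|=\ord(\mu_m^2)$, and $T=I-i\mu_m T^{(1)}+\ord(\mu_m^2)$ with $T^{(1)}=\int_{T_\ell}^{T_r}R_h$ off-diagonal. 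Therefore the diagonal entries of $T$ are $1+\ord(\mu_m^2)$, while $t_{12}=-i\mu_m(T^{(1)}_{12}+\ord(\mu_m^2))$ and $t_{21}=-i\mu_m(T^{(1)}_{21}+\ord(\mu_m^2))$, with $T^{(1)}_{12}=\int_{T_\ell}^{T_r}e^{i\Phi(h^{1/(m+1)}t)/h}\,dt$ and $T^{(1)}_{21}$ its complex conjugate (as $\Phi$ is real).

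\textbf{The Fresnel integral and assembly.} It then remains to evaluate $T^{(1)}_{12}$. By stationary phase at the degenerate critical point $x=0$ of $\Phi$, it equals $\int_\R e^{ic_mt^{m+1}}\,dt$ up to an $\ord(h^{1/(m+1)})$ error coming from the subleading Taylor terms of $\Phi$ at $0$ and from the finiteness of $I$. Using $\int_0^\infty e^{\pm ist^{m+1}}\,dt=\tfrac1{m+1}\bm{\Gamma}\!\left(\tfrac1{m+1}\right)e^{\pm i\pi/(2(m+1))}$ for $s>0$, together with $\bm{\Gamma}\!\left(\tfrac1{m+1}\right)=(m+1)\bm{\Gamma}\!\left(\tfrac{m+2}{m+1}\right)$ and $(1/|c_m|)^{1/(m+1)}=\bigl((m+1)!/|V_1^{(m)}(0)-V_2^{(m)}(0)|\bigr)^{1/(m+1)}$, and distinguishing the parity of $m$ (when $m$ is even, $m+1$ is odd, so the two half-lines carry conjugate phases and combine to $\cos(\pi/2(m+1))$; when $m$ is odd, $m+1$ is even, so $\ope{sgn}(c_m)$ enters through the single half-line and produces $e^{\ope{sgn}(c_m)i\pi/(2(m+1))}$), I obtain $\int_\R e^{ic_mt^{m+1}}\,dt=2\eta_m\bigl((m+1)!/|V_1^{(m)}(0)-V_2^{(m)}(0)|\bigr)^{1/(m+1)}\bm{\Gamma}\!\left(\tfrac{m+2}{m+1}\right)$, i.e.\ $T^{(1)}_{12}$ is $\omega_m(h)$ as in \eqref{eq:omega-m}. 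Combining with the Born expansion gives \eqref{eq:Asym-T1}, \eqref{eq:t12-t21}, \eqref{eq:omega-m}, and $T$ is unitary for the above choice of $(f_{1,\ell},f_{2,\ell},f_{1,r},f_{2,r})$ since $-iR$ is anti-Hermitian.

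\textbf{Main obstacle.} The technically hard part, which I expect to dominate the work, is the uniform boundedness of the iterated integrals $T^{(k)}$: the naive estimate $\|T^{(k)}\|\le(|T_r-T_\ell|\mu_m)^k/k!$ is useless because $|T_r-T_\ell|\to\infty$, so one must use the oscillation of $e^{\pm ic_mt^{m+1}}$ --- equivalently of $e^{\pm i\Phi/h}$ before rescaling --- at infinity. One integration by parts should suffice there, since $\p_t(c_mt^{m+1})=(m+1)c_mt^m$ vanishes only at $t=0$ and $m\ge2$ makes $|t|^{-m}$ integrable at infinity, while near $t=0$ the relevant region has bounded $t$-size. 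Converting this into uniform control of all the $T^{(k)}$ --- and thereby into both the $\ord(\mu_m^2)$ remainders and the exact form of the diagonal entries --- is the main effort, and is where the hypothesis $m\ge2$ is genuinely needed.
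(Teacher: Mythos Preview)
Your proposal is correct and follows essentially the same route as the paper: the gauge transform and Volterra/Born expansion are exactly the paper's construction of $w_{j,\dir}$ via the iterated integral operators $K_j$ (equations~\eqref{eq:Def-w1}--\eqref{eq:DefExSol}), your Fresnel-type evaluation of $T^{(1)}_{12}$ is the paper's degenerate stationary phase (Lemma~\ref{lem:D-S-Ph}), and your anti-Hermitian observation is the symmetry argument of Section~\ref{sec:unitarity}. The ``main obstacle'' you identify is precisely what the paper addresses with Lemma~\ref{lem:EstiOsci}, which tracks both $\sup|a_h|$ and a weighted derivative $\sup|x^{-l_2}a_h'|$ so that the oscillatory-integral bound can be iterated through the series (Proposition~\ref{prop:ConvS}); your rescaling $x=h^{1/(m+1)}t$ makes $\mu_m$ manifest but is otherwise cosmetic, and the paper's microlocal justification (Propositions~\ref{prop:w-f} and~\ref{prop:2dim}) fills in the step you summarize as ``a microlocal solution is the microlocalization of an exact solution.''
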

Note that the error $\ord(\mu_m^2)$ in the above formula is larger than or the same order as the essential error of order $\mu_1^2$ coming from the choice of bases of microlocal solutions.
\begin{remark}
The above statement is also true for transversal case $m=1$ (see \cite{CdV3,CdVLP,WaZe}). Our proof for the case $m=1$ however is valid only if $\mu_1\left(\log (1/h)\right)^{1/2}$ is sufficiently small whereas the smallness of $\mu_1$  is enough for their argument.
\end{remark}
\begin{remark}
The error of $\ord(h^{\frac1{m+1}})$ in \eqref{eq:omega-m} can be replaced by $\ord(h^{\frac2{m+1}})$ when $m$ is odd.
\end{remark}

\subsection{Consistency with other studies}\label{Sec:Consistent}
Let us compare with some studies on the pseudodifferential operator $Q=q^w(x,hD;\e)$ associated with a Hermitian matrix-valued symbol $q$ with the diagonal principal term:
\be
q(x,\xi;\e)=\begin{pmatrix}q_1(x,\xi)&0\\0&q_2(x,\xi)\end{pmatrix}+\e\begin{pmatrix}0&W(x,\xi)\\\overline{W(x,\xi)}&0\end{pmatrix},\quad
q_j(x,\xi)\in C^\infty(T^*\R;\R).
\ee
In the studies of matrix Schr\"odinger operators like \cite{ABA,AsFu,AFH1,AFH2,FMW3,Hi1}, $q_j=\xi^2+V_j(x)$ is the symbol of the scalar Schr\"odinger operator with the potential $V_j$, and $\e=h$ (note that in \cite{AsFu}, they considered not only scalar Schr\"odinger operators for $q_1^w$ and $q_2^w$). 
In the study of ``non-adiabatic" two-level transition in \cite{WaZe}, $q_j=\xi+V_j(x)$. 
In each of these studies, they computed the ``local scattering matrix" near a crossing point in the non-coupling regime.

Let $\rho\in \{q_1=q_2=0\}\subset T^*\R$ be a crossing point, and let $\gamma_j^\inc$ and $\gamma_j^\out$ be small subsets near $\rho$ of $\Gamma_j=\{q_j=0\}$ such that the Hamiltonian flow associated with $q_j$ is incoming to $\rho$ on $\gamma_j^\inc$ and outgoing from $\rho$ on $\gamma_j^\out$. On each $\Gamma_j$, $d_{x,\xi}q_j$ does not vanish, and the equation $q_j^w(x,hD_x)u=0$ is microlocally reduced to $hD_xv=0$ by a unitary transform $u=F_jv$ ($F_j$ is a Fourier integral operator associated with a symplectomorphism $\kappa_j$ with $q_j\circ\kappa_j(y,\eta)=\eta$ and $\kappa_j(0,0)=\rho$). We denote by $u_j^\dir:=(\chi_j^\dir)^w(x,hD_x)F_j\bm{1}$, where $\chi_j^\dir\in C_c^\infty(T^*\R;[0,1])$ is a cut-off function near $\gamma_j^\dir$, and $\bm{1}=1$ identically. 
Taking a tuple $(f_1^\inc,f_2^\inc,f_1^\out,f_2^\out)$ of bases of microlocal solutions on $(\gamma_1^\inc,\gamma_2^\inc,\gamma_1^\out,\gamma_2^\out)$ such that 
\ben
f_1^\dir(x;h)=
\begin{pmatrix}
u_1^\dir(x)+\ord(\mu_1^2)\\
\ord(\e)
\end{pmatrix},\quad
f_2^\dir(x;h)=
\begin{pmatrix}
\ord(\e)\\
u_2^\dir(x)+\ord(\mu_1^2)\\
\end{pmatrix}\qquad(\dir=\inc,\out),
\een
we can define the local scattering matrix modulo $\ord(\mu_1^2)$ as follows:

For each microlocal solution $\psi$ to $Qf=0$ near $\rho$, there exist $\alpha_1^\inc$, $\alpha_2^\inc$, $\alpha_1^\out$ and $\alpha_2^\out$ such that
\ben
\psi\equiv\alpha_j^\dir f_j^\dir\qtext{microlocally near}\gamma_j^\dir\qquad(j,\dir)\in\{1,2\}\times\{\inc,\out\}.
\een
There exists a $2\times2$-matrix $T=T(\e,h)$ such that
\be
\biggl(\begin{matrix}\alpha_1^\out\\\alpha_2^\out\end{matrix}\biggr)=T
\left(\begin{matrix}\alpha_1^\inc\\\alpha_2^\inc\end{matrix}\right).
\ee

The asymptotic behavior obtained in the above studies is roughly written in
\be\label{eq:Common-T1}
T\sim e^{-\frac{iM}2}\left(\ope{Id}_2-i
\mu_m T_{\ope{sub}}\right)
\ee
with $m$ the contact order of $\Gamma_1$ and $\Gamma_2$ at $\rho$, that is, the smallest integer such that
\be
H_{p_1}^mp_2(\rho)\neq0,\qquad(\text{equivalently }\ H_{p_2}^mp_1(\rho)\neq0),
\ee
$M$ coming from the Maslov index ($M=1$ if $\rho$ is a turning point and $M=0$ otherwise), and $T_{\ope{sub}}$ has the form
\be
T_{\ope{sub}}=\begin{pmatrix}
0&\overline{\omega}\\\omega&0
\end{pmatrix}
\ee
with
\be\label{eq:Common-T2}
\omega=2\eta_{m}\overline{W(\rho)}\bm{\Gamma}\left(\frac{m+2}{m+1}\right)\left(\frac{(m+1)!}{\left|H_{p_1}^mp_2(\rho)\right|}\right)^{\frac1{m+1}}.
\ee
Note that only the model in \cite{AsFu} allows the difference between $\p_\xi q_1$ and $\p_\xi q_2$, and their formula includes these values. 
\subsection{Terminologies of semiclassical and microlocal analysis}\label{sec:Terminologies}
We recall some terminologies of semiclassical and microlocal analysis (see e.g., \cite{DiSj,Ma,Zw} for more details). 
For a symbol $a\in C_c^\infty(T^*\R;\C^{N\times N})$ $(N\in\N\setminus\{0\})$, the corresponding $h$-pseudodifferential operator, denoted by $a^w(x,hD_x)$, is defined as a bounded operator in $L^2(\R;\C^2)$ via the $h$-Weyl quantization
\begin{align}\label{WeylQ}
 a^w(x,hD_x)u(x)
:=\frac{1}{2\pi h}\int_{T^*\R}e^{i(x-y)\xi/h}a\left(\frac{x+y}{2},\xi\right)u(y)dyd\xi.
\end{align}
Let $(x_0,\xi_0)\in T^*\R$ and $f=f(x;h)\in L^2(\R; \C^N)$ with $\|f\|_{L^2}=\ord(h^{-k})$ for some $k\in\N$ and for any small $h>0$. 
We say that $f$ is microlocally $0$ near $(x_0,\xi_0)$ and denote that by
\ben
f(x;h)\equiv0\qtext{near}(x_0,\xi_0),
\een
if there exists a symbol $\chi\in C_c^\infty(T^*\R;\C^{N\times N})$ with $\det\, \chi(x_0,\xi_0)\neq0$ such that 
\ben
\| \chi^w (x,hD_x)f\|_{L^2(\R)}=\ord(h^\infty).
\een 
We also say that $f$ is microlocally $0$ near a set $\Omega\subset T^*\R$ if it is microlocally $0$ near each point of $\Omega$. 
We define $\ope{WF}_h(f)$, the semiclassical wave front set of $f$, as the set of all points of $T^*\R$ where $f$ is not microlocally $0$. For an $h$-pseudodifferential operator $a^w$, we say a function $f(\,\cdot\,;h)\in L^2$ is a microlocal solution to the equation $a^wf=0$ near $(x_0,\xi_0)\in T^*\R$ if there exists $k\in\N$ such that
\ben
\left\|f(\,\cdot\,;h)\right\|_{L^2}=\ord(h^{-k})\qtext{as}h\to0^+,\qtext{and}
a^wf(x;h)\equiv0\qtext{near}(x_0,\xi_0).
\een

The following results for scalar $h$-pseudodifferential operators are well-known (see e.g. \cite[Theorem 12.5]{Zw}).  Let $a\in C_c^\infty(T^*\R;\C)$.  If $f$ is a microlocal solution on $\Omega\subset T^*\R$ to $a^wf=0$,  we have $\ope{WF}_h(f)\cap \Omega\subset \{a=0\}$. 
Moreover, if $\p a\neq0$ on $\{a=0\}\cap \Omega$, the semiclassical wavefront set $\ope{WF}_h(f)\cap \Omega$ is invariant under the Hamiltonian flow of $a$. 

\subsection{Plan of the manuscript}\label{Sec:Plan}
In the next section, we generalize Theorem~\ref{thm:Asymptot-T} to a slightly wider class of systems. 
Then, we see Theorem~\ref{thm:Asymptot-T} as a particular result of Theorem~\ref{thm:Asymptot-T2} with Proposition~\ref{thm:Unitary-SM1}. We devote Sections~\ref{Sec:Proof1} and \ref{Sec:Proof2} to the proof of these results. 
In Section~\ref{Sec:Proof1}, we compute the asymptotics for the transfer matrix between two bases of (exact) solutions to the system without any microlocal argument. A justification by  microlocal arguments is done in Section~\ref{Sec:Proof2}. The unitarity stated in Propositions~\ref{thm:Unitary-SM1} and \ref{thm:Unitary-SM2} is also shown in Section~\ref{Sec:Proof2}.

\section{Generalization}
Our method is applicable for a slightly more general class of first-order differential systems. 
One of developments is that we allow the determinant of the full-symbol to vanish. 
Such a situation is considered in \cite{CdV3} for a transversal crossing $(m=1)$, and in \cite{Jo} for a complex crossing, that is, when the determinant has a non-real complex zero. The recent result \cite{Lo} on resonance  asymptotics for a matrix Schr\"odinger operator also concerned this vanishment. The crossing there of possible energies is no longer avoided, and the border of the regimes also depends on the vanishing order of the subprincipal term. 
Another one is that the symbol does not necessarily be Hermitian matrix-valued. 
We show that the local scattering matrix becomes unitary when the system is equivalently given by an operator with a Hermitian matrix-valued symbol.

\subsection{Setting and asymptotics for the transfermatrix}
We consider the system
\be\label{eq:System2}
P(\e_1,\e_2,h)w:=\left(hD_x\otimes \ope{Id}_2+H(\e_1,\e_2)\right)w=0,\quad
H(\e_1,\e_2)=\begin{pmatrix}V_1&\e_1U_1\\\e_2U_2&V_2\end{pmatrix},
\ee
where $\e_1,\e_2,h$ are positive parameters,  
and functions $V_1$, $V_2$, $U_1$, and $U_2$ are smooth $(C^\infty)$ satisfying Condition~\ref{C1} and the following.
\begin{condition}\label{C2}
The vanishing order at $x=0$ of $U_1$ and of $U_2$ are at most finite.
\end{condition}

For $j=1,2$, we denote by $n_j\in\N=\{0,1,2,\ldots\}$ the vanishing order at $x=0$ of $U_j$:
\ben
U_j^{(k)}(0)=0,\quad(0\le k\le n_j-1)\qquad
U_j^{(n_j)}(0)\neq0.
\een
We continue to use the notations introduced in the previous section. 
The microlocal properties away from the crossing point $\rho_0=(0,-V_0)$ is similar. We denote by 
\ben
\til{\e}=\sqrt{\e_1\e_2}\quad\text{and }\ \til{n}=\frac12(n_1+n_2)
\een
the geometric mean of $\e_1$, $\e_2$ and the arithmetic mean of $n_1$, $n_2$, respectively. The condition
\be\label{eq:weak-regime2}
\til{\mu}_1:= \mu_1(\til{\e}
,h)=\sqrt{\e_1\e_2}h^{-1/2}\to0\qquad\text{as }\ h\to0^+
\ee
is enough for recognizing $\Gamma=p_0^{-1}(0)$ as the characteristic set. 
For each 
$(j,\dir)\in\{1,2\}\times\{\ell,r\}$, there exists a function $f_{j,\dir}$ which spans the space of microlocal solutions on $\gamma_{j,\dir}$ and admits the asymptotics \eqref{eq:mlBasis-f} 
with smooth functions $a_{j,\dir}$ and $b_{j,\dir}$ such that
\be\label{eq:mlBasis-fab2}
\begin{aligned}
&a_{1,\dir}=1+\ord(\til{\mu}_1^2),
&&b_{1,\dir}=\ord(\e_2),\\
&a_{2,\dir}=\ord(\e_1),
&&b_{2,\dir}=1+\ord(\til{\mu}_1^2).
\end{aligned}\qquad(\dir=\ell,r)
\ee
Such a function $f_{j,\dir}$ satisfying \eqref{eq:mlBasis-f} and \eqref{eq:mlBasis-fab2} is unique up to an error of $\ord(\til{\mu}_1^2)$. 
Again by Proposition~\ref{prop:2dim}, we can define the transfer matrix $T$ modulo $\ord(\til{\mu}_1^2)$ by \eqref{eq:Def-T}.
For $l\in\N$, define $\mu_{m,l}$ (a generalization of \eqref{eq:mu-k} with $\mu_m=\mu_{m,0}$) by
\be
\mu_{m,l}=\mu_{m,l}(\til{\e},h):=
\left\{
\begin{aligned}
&\til{\e}h^{-\frac{m-l}{m+1}}
&&\text{when }2l+1<m,\\
&
\til{\e}h^{-\frac12}(\log(1/h))^{\frac 12 \dl_{2l+1,m}}
&&\text{when }2l+1\ge m.
\end{aligned}\right.
\ee
We here used the Kronecker delta $\dl_{k,l}=1$ $(k=l)$ and $\dl_{k,l}=0$ $(k\neq l)$. 

\begin{theorem}\label{thm:Asymptot-T2}
Under Conditions~\ref{C1} and \ref{C2}, there exist $h_0>0$ and $\mu^0>0$ such that $T=T(h,\e_1,\e_2)$ admits 
\be\label{eq:Asym-T1}
T=\begin{pmatrix}
1+\ord(\mu_{m,\til{n}}^2)
&t_{12}
\\t_{21}
&1+\ord(\mu_{m,\til{n}}^2)
\end{pmatrix}
\ee
with
\be\label{eq:t12-t21}
\begin{aligned}
&t_{12}=-i
\e_1h^{-\frac{m-n_1}{m+1}}
(\til{\omega}_{m,n_1}(h;U_1,V_1-V_2)+\ord(\mu_{m,\til{n}}^2)),
\\&
t_{21}=-i
\e_2h^{-\frac{m-n_2}{m+1}}
(\til{\omega}_{m,n_2}(h;U_2,V_2-V_1)+\ord(\mu_{m,\til{n}}^2)),
\end{aligned}
\ee
uniformly for $0<h<h_0$ and $0<\e_j\le 1$ $(j=1,2)$ with $0<\mu_{m,\til{n}}<\mu^0$. 
Here, we define $\til{\omega}_{m,n}(h;W,Q)$ by
\be\label{eq:Osci-Int}
\til{\omega}_{m,n}(h;W,Q):=h^{\frac{n-1}{m+1}}\int_I\chi(x)W(x)\exp\left(\frac ih\int_0^xQ(y)dy\right)dx,
\ee
with $\chi\in C_c^\infty(I;[0,1])$ such that $\chi(x)=1$ near $x=0$. 
The values $\til{\omega}_{m,n_1}(h;U_1,V_1-V_2)$ and $\til{\omega}_{m,n_2}(h;U_2,V_2-V_1)$ are bounded as $h\to0^+$, and they are invariant up to $\ord(h^\infty)$ with respect to the choice of such a $\chi$. 
Their asymptotic behavior is given by \eqref{eq:Def-til-omega}, \eqref{eq:mn-even-tilomega0} and \eqref{eq:mn-odd-tilomega0}.
\end{theorem}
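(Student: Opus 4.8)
\textbf{Proof proposal for Theorem~\ref{thm:Asymptot-T2}.}

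The plan is to reduce the system \eqref{eq:System2} to a scalar situation on each characteristic curve, compute a transfer matrix between two concrete bases of \emph{exact} solutions, and then translate this back into a statement about microlocal solutions. First I would exploit the peculiar structure of the operator (a function of $x$ plus a polynomial---here degree one---in $\xi$): conjugating by the scalar phases $e^{-i\int_0^x V_j(y)\,dy/h}$ turns the diagonal into $hD_x$ and the off-diagonal entries into $\e_j U_j(x) e^{\mp i\int_0^x (V_1-V_2)\,dy/h}$, so that each component of $w$ satisfies an ODE whose leading part is $hD_x$ and whose coupling term carries the oscillating factor $\exp(\pm\frac ih\int_0^x(V_1-V_2))$. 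One then sets up a system of integral (Volterra-type) equations on the interval $I$ for the pair $(w_1,w_2)$ with the four boundary data prescribed at $x_\ell$ and $x_r$ according to which incoming/outgoing curve one is following; iterating this integral equation (Picard/Born series) produces the entries of $T$ as a convergent series whose first nontrivial term is exactly a single oscillatory integral of the form $\int_I \chi(x)U_j(x)\exp(\pm\frac ih\int_0^x(V_1-V_2))\,dx$, i.e. $\til\omega_{m,n_j}$ up to the normalizing power of $h$.

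Next I would carry out the stationary-phase / saddle-point analysis of that oscillatory integral. Since $V_1-V_2$ vanishes to order $m$ at $x=0$ with $V_1^{(m)}(0)-V_2^{(m)}(0)\neq0$, we have $\int_0^x(V_1-V_2)\sim \frac{V_1^{(m)}(0)-V_2^{(m)}(0)}{(m+1)!}x^{m+1}$, and $U_j(x)\sim \frac{U_j^{(n_j)}(0)}{n_j!}x^{n_j}$. Rescaling $x = h^{1/(m+1)} s$ converts the integral into $h^{(n_j+1)/(m+1)}$ times a convergent model integral $\int_\R s^{n_j}\exp(\pm i c\, s^{m+1})\,ds$ (up to rapidly small corrections from the cutoff and the Taylor remainders, which give the $\ord(h^{1/(m+1)})$ error term); this is a standard Fresnel-type integral evaluable in terms of the Gamma function, and it produces the constant $\eta_m$, the Gamma-value $\bm\Gamma(\frac{m+2}{m+1})$ (with argument shifted by $n_j$ in the generalized case), and the power $\bigl((m+1)!/|V_1^{(m)}(0)-V_2^{(m)}(0)|\bigr)^{\cdots}$, exactly as in \eqref{eq:omega-m} and \eqref{eq:Common-T2}. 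The even/odd dichotomy in $\eta_m$ comes from whether the model phase $s^{m+1}$ is even (so both tails contribute a complex-conjugate pair, giving a cosine) or odd (so the contour can be rotated by a single angle $\pm\pi/(2(m+1))$ according to the sign of $V_1^{(m)}(0)-V_2^{(m)}(0)$).

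The bookkeeping step is to control the higher-order terms of the Born series uniformly in $(\e_1,\e_2,h)$: the $k$-th term carries a factor $\e_1\e_2$ for each pair of couplings and an iterated oscillatory integral, and one must show these sum to $1+\ord(\mu_{m,\til n}^2)$ on the diagonal and contribute only $\ord(\mu_{m,\til n})$ relative corrections to $t_{12},t_{21}$. Here the different scalings that define $\mu_{m,l}$ enter: a double oscillatory integral involving $U_1$ and $U_2$ behaves like $h$ to a power governed by $\til n=(n_1+n_2)/2$, and when $2l+1\ge m$ the remaining integral is no longer genuinely oscillatory on the $h^{1/(m+1)}$ scale and one picks up either a bounded factor or, in the borderline case $2l+1=m$, a $\log(1/h)$ — which is the origin of the Kronecker-delta logarithmic factor in $\mu_{m,l}$. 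I expect this uniform estimation of the series, especially tracking the precise power of $h$ and the logarithmic borderline, to be the main obstacle; everything else is either a direct computation or an application of the earlier structural results. Finally, the microlocal justification — that the transfer matrix between exact solutions on $I$, once restricted to the relevant wavefront sets, agrees modulo $\ord(\til\mu_1^2)$ with the matrix $T$ of \eqref{eq:Def-T} defined via the microlocal bases $f_{j,\dir}$ — follows from Proposition~\ref{prop:2dim} (two-dimensionality of the microlocal solution space at $\rho_0$) together with the uniqueness of $f_{j,\dir}$ up to $\ord(\til\mu_1^2)$ and the standard propagation statement recalled in Subsection~\ref{sec:Terminologies}; the unitarity claim is deferred to Section~\ref{Sec:Proof2}.
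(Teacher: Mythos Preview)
Your proposal is correct and follows essentially the same route as the paper: construct exact solutions on $I$ via a Born/Volterra iteration built from the fundamental solutions $u_j=e^{-i\int_0^x V_j/h}$, identify the first off-diagonal term of the transfer matrix with the oscillatory integral $\til\omega_{m,n_j}$, analyze it by degenerate stationary phase, and then invoke Proposition~\ref{prop:2dim} together with the characterization \eqref{eq:mlBasis-f}--\eqref{eq:mlBasis-fab2} to pass from the exact transfer matrix to the microlocal $T$. The paper packages the ``main obstacle'' you flag into a single sharp oscillatory-integral estimate (Lemma~\ref{lem:EstiOsci}) that controls $h^{-1}\int x^{l_1}a_h e^{i\phi/h}$ in terms of both $\sup|a_h|$ and $\sup|x^{-l_2}a_h'|$, and then iterates this via weighted $C^1$-norms $\|f\|_{n,q}=\sup|f|+h^q\sup|x^{-n}f'|$ to obtain the convergence and the $\ord(\mu_{m,\til n}^2)$ remainder in one stroke; your rescaling $x=h^{1/(m+1)}s$ for the leading term is replaced by a direct appeal to H\"ormander's degenerate stationary-phase expansion (Lemma~\ref{lem:D-S-Ph}), but the content is the same.
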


Note that the error $\ord(\mu_{m,\til{n}}^2)$ in the above formula is larger than or the same order as the essential error of order $\til{\mu}_1^2$ 
coming from the choice of the bases $(f_{1,\ell},f_{2,\ell},f_{1,r},f_{2,r})$. 

By applying the degenerate stationary phase method (see Lemma~\ref{lem:D-S-Ph}), the principal term admits the approximation
\be\label{eq:Def-til-omega}
\til{\omega}_{m,n_1}(h;U_1,V_1-V_2)=\til{\omega}_{m,n_1}^0(U_1,V_1-V_2)\left(1+\ord\left(h^{\frac{1}{m+1}}\right)\right),
\ee
with
\be\label{eq:mn-even-tilomega0}
\til{\omega}^0_{m,n}(W,Q)=\frac{2\eta_{m,n}W^{(n)}(0)}{(m+1)n!}\left(\frac{(m+1)!}{\left|Q^{(m)}(0)\right|}\right)^{\frac{n+1}{m+1}}\bm{\Gamma}\left(\frac{n+1}{m+1}\right),
\ee
when at least one of $m$ and $n_1$ is even. 
When both of $m$ and $n_1$ are odd, we have
\be
\til{\omega}_{m,n_1}(h;U_1,V_1-V_2)=h^{\frac1{m+1}}\til{\omega}_{m,n_1}^0(U_1,V_1-V_2)\left(1+\ord\left(h^{\frac2{m+1}}\right)\right),
\ee
with
\be\label{eq:mn-odd-tilomega0}
\begin{aligned}
\til{\omega}_{m,n}^0(W,Q)
&=\frac{2\eta_{m,n}}{(m+1)(n+1)!}\left(\frac{(m+1)!}{\left|Q^{(m)}(0)\right|}\right)^{\frac{n+2}{m+1}}\bm{\Gamma}\left(\frac{n+2}{m+1}\right)\\
&\quad\times\left(W^{(n+1)}(0)-\frac{(n+1)(n+2)Q^{(m+1)}(0)}{(m+1)(m+2)Q^{(n)}(0)}W^{(n)}(0)\right).
\end{aligned}
\ee
Here, $\bm{\Gamma}$ stands for the Gamma function, and 
\be\label{eq:eta}
\eta_{m,n}:=\left\{
\begin{aligned}
&i^n\cos\left(\frac{(1-mn)\pi}{2(m+1)}\right)&&\text{when $m$ is even},\\
&\exp\left(\frac{\ope{sgn}(Q^{(n)}(0))i(n+1)\pi}{2(m+1)}\right)&&\text{when $m$ is odd and $n$ is even},\\
&\exp\left(\frac{\ope{sgn}(Q^{(n)}(0))i(n+2)\pi}{2(m+1)}\right)&&\text{when $mn$ is odd}. 
\end{aligned}\right.
\ee
Note that $\eta_{m,n}$ vanishes if $m$ is even and $(n+1)m\in2(m+1)\N$. The other part of $\til{\omega}_{m,n}^0(W,Q)$ does not vanish if $Q^{(m)}(0)W^{(n)}(0)\neq0$ when at least one of $m$ and $n$ is even. 
The asymptotic behavior of $\til{\omega}_{m,n_2}(h;U_2,V_2-V_1)$ is given in the same manner.

\subsection{Unitarity of the local scattering matrix}
We define the local scattering matrix in an analogous way to that for the transversal crossings \cite{CdV3}. 
We can define it as an (asymptotically) unitary matrix in several cases such that the system is (equivalently) given by an operator with Hermitian matrix-valued symbol. 
Note that for each scalar symbol $p_j(x,\xi)$, the orientation of the Hamiltonian flow associated with $p_j$ is opposite to that with $-p_j$. However, the equations $p_j^w(x,hD_x)u=0$ and $-p^w(x,hD_x)u=0$ are equivalent. 
The Hermitian-valuedness determines the suitable orientation of the flow on each curve of the characteristic set. 

In the rest of this section, we always use the tuple $(f_{1,\ell},f_{2,\ell},f_{1,r},f_{2,r})$ such that $\det T=1$ (see \eqref{eq:DetT} for the existence of such a tuple) for simplicity. 
According to \eqref{eq:ess-error2}, the unitarity for the local scattering matrix corresponding to this tuple implies the asymptotic unitarity for any choice of the tuple, that is,
\ben
\left\|Sv\right\|_{\C^2}=(1+\ord(\til{\mu}_1^2))\left\|v\right\|_{\C^2}\quad v\in\C^2
\een
holds for the local scattering matrix $S$ corresponding to an arbitrary tuple.

Let us first consider the case that the full-symbol of $P$ itself,
\be
p(x,\xi)
=p_0(x,\xi)+\begin{pmatrix}0&\e_1 U_1(x)\\\e_2U_2(x)&0\end{pmatrix}
=\begin{pmatrix}\xi+V_1(x)&\e_1 U_1(x)\\\e_2U_2(x)&\xi+V_2(x)\end{pmatrix},
\ee
is Hermitian matrix-valued. It is equivalent to
\be\label{eq:Hermite1}
U_1=\overline{U_2}=:W,\quad\e_1=\e_2=:\e.
\ee
In this case, for $j=1,2$, the Hamiltonian flow associated with $p_j$ is incoming to $\rho_0$ on $\gamma_{j,\ell}$, and outgoing from $\rho_0$ on $\gamma_{j,r}$. Then the transfer matrix itself becomes the local scattering matrix $S=S(h,\e):=T(h,\e,\e)$.
\begin{proposition}\label{thm:Unitary-SM1}
Under Conditions~\ref{C1}, \ref{C2},  and \eqref{eq:Hermite1}, $S$ is unitary for $\mu_{m,n}$ small enough.
\end{proposition}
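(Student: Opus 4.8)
The plan is to deduce unitarity of $S=T(h,\e,\e)$ from the formal self-adjointness of the operator $P(\e,\e,h)$ under the Hermiticity assumption \eqref{eq:Hermite1}, combined with the asymptotic normalization of the microlocal bases $f_{j,\dir}$. First I would observe that, under \eqref{eq:Hermite1}, the operator $P=hD_x\otimes\ope{Id}_2+H(\e,\e)$ with $H=\begin{pmatrix}V_1&\e W\\\e\overline W&V_2\end{pmatrix}$ is formally self-adjoint on $L^2(I;\C^2)$, since $hD_x$ is symmetric and $H$ is pointwise Hermitian. The standard device (cf.\ \cite{CdV3,CdVLP}) is then a conserved (microlocal) bilinear quantity: for two microlocal solutions $\psi,\tilde\psi$ near $\rho_0$, the ``current'' $\langle \psi(x),\tilde\psi(x)\rangle_{\C^2}$ is, up to $\ord(h^\infty)$, independent of $x$ across the interval $I$, because $hD_x(\langle\psi,\tilde\psi\rangle)=\langle hD_x\psi,\tilde\psi\rangle-\langle\psi,hD_x\tilde\psi\rangle=-\langle H\psi,\tilde\psi\rangle+\langle\psi,H\tilde\psi\rangle\equiv0$ by Hermiticity. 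Evaluating this constant once for $x<0$ (on the incoming branches) and once for $x>0$ (on the outgoing branches) gives the identity that forces $T$ to be unitary.

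The key steps, in order, are: (i) Verify that on each $\gamma_{j,\dir}$ the basis function $f_{j,\dir}$ from \eqref{eq:mlBasis-f}–\eqref{eq:mlBasis-fab2} has unit ``current'' $\langle f_{j,\dir},f_{j,\dir}\rangle_{\C^2}=|a_{j,\dir}|^2+|b_{j,\dir}|^2=1+\ord(\til\mu_1^2)$ and that cross-currents between $f_{1,\dir}$ and $f_{2,\dir}$ are $\ord(\e)=\ord(\til\mu_1)$; more precisely, one wants the currents matrix in the basis $(f_{1,\dir},f_{2,\dir})$ to be $\ope{Id}_2+\ord(\til\mu_1^2)$ after the normalization choice $\det T=1$ that is fixed in the surrounding text. (ii) Show the current is microlocally conserved along the Hamiltonian flow on each branch (from $\p a\ne0$ on $\Gamma_j\setminus\{\rho_0\}$ and the propagation result recalled in Subsection \ref{sec:Terminologies}), so the value on $\gamma_{j,\ell}$ equals the value on $\gamma_{j,r}$ for the \emph{same} global microlocal solution. (iii) Given an arbitrary microlocal solution $\psi$ with coefficients $\alpha_{j,\dir}$ as in \eqref{eq:mlCoeff}, equate the total current computed from the incoming data $(\alpha_{1,\ell},\alpha_{2,\ell})$ with that from the outgoing data $(\alpha_{1,r},\alpha_{2,r})=T(\alpha_{1,\ell},\alpha_{2,\ell})^{\mathsf t}$; since this holds for all $(\alpha_{1,\ell},\alpha_{2,\ell})\in\C^2$ and the current forms are $\ope{Id}_2+\ord(\til\mu_1^2)$, one gets $T^*T=\ope{Id}_2+\ord(\til\mu_1^2)$, i.e.\ asymptotic unitarity. (iv) Upgrade from asymptotic to exact unitarity for the distinguished tuple: here I would use that $\det T=1$ is already arranged, so $T^*T=\ope{Id}_2+\ord(\til\mu_1^2)$ together with $|\det T|=1$ and the explicit near-diagonal form \eqref{eq:Asym-T1}–\eqref{eq:t12-t21} pins down $T$ to be genuinely unitary — alternatively, re-derive the exact conservation law not just microlocally but using exact solutions on $I$ (as in Section \ref{Sec:Proof1}) where the current is \emph{exactly} constant, which yields exact unitarity directly before any microlocalization error enters.

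I expect the main obstacle to be step (iv): passing from an \emph{approximate} orthonormality/unitarity statement (which the current argument gives cheaply, modulo $\ord(\til\mu_1^2)$, because the $f_{j,\dir}$ are themselves only defined modulo $\ord(\til\mu_1^2)$) to an \emph{exact} unitarity claim for a specific choice of tuple. The clean route is to work on the level of exact solutions on the bounded interval $I$: the Wronskian-type pairing $W[\psi,\tilde\psi](x)=\langle\psi(x),\tilde\psi(x)\rangle_{\C^2}$ of two exact solutions of $Pw=0$ is \emph{exactly} $x$-independent when $p$ is Hermitian, so if one selects the exact solution bases near $x_\ell$ and near $x_r$ to be exactly orthonormal in this pairing (possible by Gram–Schmidt in a $2$-dimensional space with a positive-definite Hermitian form), the connecting matrix is exactly unitary; one then checks this exact-solution connecting matrix agrees with $T$ up to the allowed $\ord(\til\mu_1^2)$, and that the residual freedom is exactly the unitary diagonal rescaling that preserves both $\det T=1$ and the normalization \eqref{eq:mlBasis-fab2}. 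The bookkeeping of \emph{which} representatives $f_{j,\dir}$ realize exact orthonormality, and checking this is consistent with the sign/phase conventions implicit in \eqref{eq:mlBasis-f}, is the delicate part; everything else is the standard conserved-current argument for self-adjoint first-order systems.
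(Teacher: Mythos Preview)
Your proposal is correct in spirit and would succeed, but it takes a genuinely different route from the paper and contains an avoidable detour.

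\textbf{Comparison with the paper.} The paper does not use a conserved current at all. Instead it exploits a concrete conjugation symmetry of the exact solutions: under \eqref{eq:Hermite1} one has $u_+w_{2,\dir}=\begin{pmatrix}0&-1\\1&0\end{pmatrix}\overline{u_+w_{1,\dir}}$, which forces $T_{\ope{ex}}$ to have the shape $\begin{pmatrix}\tau_1&-\overline{\tau_2}\\\tau_2&\overline{\tau_1}\end{pmatrix}$. Combined with $\det T_{\ope{ex}}=1$ (from \eqref{eq:DetT}) this shape is exactly $SU(2)$. The same mechanism, with the sign flipped, handles Proposition~\ref{thm:Unitary-SM2}. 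So the paper's argument is a two-line algebraic symmetry check tied to the explicit construction in Section~\ref{Sec:Proof1}; your argument is the standard conserved-current argument for Hermitian first-order systems. Yours is more conceptual and would generalize; the paper's is shorter and simultaneously outputs the precise matrix shape, which is what makes the twisted definition \eqref{eq:Def-tilT} visibly unitary in the companion proposition.

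\textbf{On your detour and the clean version.} Your steps (i)--(iii) followed by an ``upgrade'' are unnecessary and, as you yourself flag in (iv), do not quite close: $T^*T=\ope{Id}_2+\ord(\til\mu_1^2)$ together with $\det T=1$ does not pin down exact unitarity. The route you call ``alternative'' is the right one and is immediate here: for exact solutions of $Pw=0$ with $H$ Hermitian, the Gram form $\langle\psi(x),\tilde\psi(x)\rangle_{\C^2}$ is \emph{exactly} constant in $x$. Since the initial data in \eqref{eq:Charactelization} are diagonal unitary matrices (the $V_j$ being real), the exact bases $(w_{1,\dir},w_{2,\dir})$ are already orthonormal for every $x$---no Gram--Schmidt is needed---and hence $T_{\ope{ex}}=W_r(x)^{-1}W_\ell(x)$ is a product of unitary matrices. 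That gives exact unitarity in one line and makes the microlocal approximation layer irrelevant for this proposition.
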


\begin{figure}
\centering
\includegraphics[bb=0 0 371 198, width=8cm]{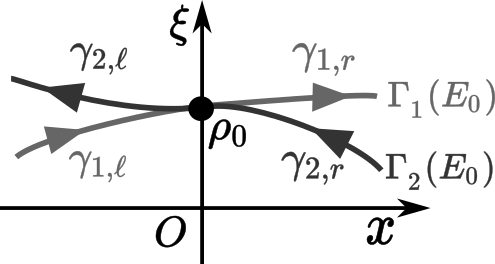}
\caption{Hamiltonian flows associated with the diagonal part of \eqref{eq:Modified-Sym}}
\label{Fig:}
\end{figure}

There is another case that System \eqref{eq:System} is equivalently given by a pseudodifferential operator with a Hermitian matrix-valued symbol. Consider the case with
\be\label{eq:Hermite2}
U_1=-\overline{U_2}=:W,\quad\e_1=\e_2=:\e.
\ee
Then the system \eqref{eq:System} is equivalent to
\be
\til{P}w:=\begin{pmatrix}1&0\\0&-1\end{pmatrix}Pw=\begin{pmatrix}hD_x+V_1&\e W\\\e\overline{W}&-(hD_x+V_2)\end{pmatrix}w=0,
\ee
where the full-symbol of $\til{P}$, 
\be\label{eq:Modified-Sym}
\til{p}(x,\xi)=\begin{pmatrix}\xi+V_1(x)&\e W(x)\\\e\overline{W(x)}&-(\xi+V_2(x))\end{pmatrix},
\ee
is Hermitian matrix-valued. 
In this case, the Hamiltonian flow associated with $p_1$ behaves in the same manner as the case with \eqref{eq:Hermite1}, 
whereas that with $-p_2$ is reverse to that with $p_2$, outgoing from $\rho_0$ on $\gamma_{2,\ell}$, and incoming to $\rho_0$ on $\gamma_{2,r}$. 
Taking this into account, we define $S$ by
\be\label{eq:Def-tilT}
\begin{pmatrix}
\alpha_{1,r}\\\alpha_{2,\ell}
\end{pmatrix}
=
S
\begin{pmatrix}
\alpha_{1,\ell}\\\alpha_{2,r}
\end{pmatrix},\quad
S
=\frac{1}{t_{22}}
\begin{pmatrix}
1
&t_{12}\\
-t_{21}&1
\end{pmatrix},
\ee
where $\alpha_{j,\dir}$ $(j,\dir)\in\{1,2\}\times\{\ell,r\}$ are determined for each microlocal solution $\psi$ near $\rho_0$ by \eqref{eq:mlCoeff}, and $t_{jk}$ stands for the $(j,k)$-entry of $T$.

\begin{proposition}\label{thm:Unitary-SM2}
Under Conditions~\ref{C1}, \ref{C2}, and \eqref{eq:Hermite2}, $S$ is unitary for $\mu_{m,n}$ small enough.
\end{proposition}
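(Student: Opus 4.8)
The plan is to reduce unitarity of $S$ to an algebraic identity on the entries of $T$, and then to establish that identity from a symplectic/Wronskian conservation law for the modified operator $\til P$. First I would observe that, since we are working with the normalization $\det T=1$, the matrix $S$ defined in \eqref{eq:Def-tilT} is exactly the result of the standard ``partial inversion'' (or Redheffer star) operation that exchanges the roles of the second incoming and second outgoing channels: writing $T=\begin{pmatrix}t_{11}&t_{12}\\t_{21}&t_{22}\end{pmatrix}$ with $t_{11}t_{22}-t_{12}t_{21}=1$, a direct computation shows that $S$ is unitary if and only if $T$ preserves the indefinite Hermitian form $\mathrm{diag}(1,-1)$, i.e. $T^*\,\mathrm{diag}(1,-1)\,T=\mathrm{diag}(1,-1)$. (This is the familiar fact that partial inversion intertwines the group $U(1,1)$ acting on $T$ with the group $U(2)$ acting on $S$.) So the task becomes: show $T\in U(1,1)$ for the system \eqref{eq:System2} under hypothesis \eqref{eq:Hermite2}.

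Next I would prove $T\in U(1,1)$ via a conserved sesquilinear pairing. Because $\til p$ in \eqref{eq:Modified-Sym} is Hermitian matrix-valued, the operator $\til P=\mathrm{diag}(1,-1)P$ satisfies $\til P^*=\til P$ in the formal sense, and hence for two exact solutions $w,\til w$ of $\til P w=0$ on the interval $I$ the quantity $\langle \til w, \mathrm{diag}(1,-1)w\rangle$ has $x$-independent ``flux'': integrating $0=\langle \til P w,\mathrm{diag}(1,-1)\til w\rangle - \langle \mathrm{diag}(1,-1)w,\til P\til w\rangle$ by parts, the bulk terms cancel by Hermiticity and one is left with the boundary quantity $h\,\overline{\til w(x)}^{\,\mathsf T}\,\mathrm{diag}(1,-1)\,w(x)$ being constant in $x$. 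Evaluating this invariant at $x<0$ and at $x>0$ using the WKB basis \eqref{eq:mlBasis-f}–\eqref{eq:mlBasis-fab2} — where the phases $e^{-i\int V_j/h}$ are purely oscillatory and the amplitude vectors are $(1,0)^{\mathsf T}+\ord(\til\mu_1)$ and $(0,1)^{\mathsf T}+\ord(\til\mu_1)$ — converts the constancy of the flux into the statement $\sum_j \sigma_j|\alpha_{j,\ell}|^2=\sum_j\sigma_j|\alpha_{j,r}|^2$ with signs $\sigma_1=1$, $\sigma_2=-1$, for every microlocal solution $\psi$; since $\psi$ is arbitrary this is precisely $T^*\mathrm{diag}(1,-1)T=\mathrm{diag}(1,-1)$. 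The extra orientation bookkeeping in the excerpt (the flow on $\gamma_{2,\dir}$ for $-p_2$ is reversed) is exactly what makes the $\mathrm{diag}(1,-1)$ form, rather than $\mathrm{Id}_2$, the conserved one, and it is also what makes the partially-inverted $S$, and not $T$, the genuine unitary.

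The step I expect to be the main obstacle is the passage from \emph{exact} solutions on $I$ to \emph{microlocal} solutions near $\rho_0$, i.e. making sure the flux identity is not contaminated by the truncation/microlocalization errors. Concretely: the $f_{j,\dir}$ are only microlocal solutions and are only determined modulo $\ord(\til\mu_1^2)$, the equality $\psi\equiv\alpha_{j,\dir}f_{j,\dir}$ in \eqref{eq:mlCoeff} holds only microlocally near $\gamma_{j,\dir}$, and the flux computation secretly needs genuine solutions on an open real interval. The remedy, which I would carry out following the microlocal framework already set up for Section~\ref{Sec:Proof2}, is to lift $\psi$ to an honest solution on $I$ away from $\rho_0$ — on the two pieces $I\cap\{x<0\}$ and $I\cap\{x>0\}$ the system is elliptic off $\Gamma$ and a single ODE along each $\Gamma_j$, so a microlocal solution coincides, up to $\ord(h^\infty)$ and up to the $\ord(\til\mu_1^2)$ ambiguity of the basis, with the exact WKB solution — evaluate the (exactly conserved) flux of $\til P$ at one point $x_\ell<0$ and one point $x_r>0$, and check that the cross terms between $f_{1,\dir}$ and $f_{2,\dir}$ contribute only $\ord(\til\mu_1^2)$ to each flux because of the off-diagonal smallness in \eqref{eq:mlBasis-fab2}. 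This yields $T^*\mathrm{diag}(1,-1)T=\mathrm{diag}(1,-1)+\ord(\mu_{m,n}^2)$; combined with $\det T=1$ and the already-established asymptotics of Theorem~\ref{thm:Asymptot-T2}, which pin down $T$ up to the stated errors, one upgrades this to exact unitarity of $S$ for $\mu_{m,n}$ small enough, exactly as in the proof of Proposition~\ref{thm:Unitary-SM1} (the $\mathrm{Id}_2$-case), with $\mathrm{diag}(1,-1)$ replacing $\mathrm{Id}_2$ throughout.
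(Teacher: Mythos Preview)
Your reduction is correct: $S$ is unitary if and only if $T\in U(1,1)$, and with $\det T=1$ this is equivalent to $t_{12}=\overline{t_{21}}$ together with $|t_{22}|^2-|t_{21}|^2=1$. Your route to $T\in U(1,1)$ via a conserved sesquilinear flux for $\til P$ is also sound: for any two exact solutions $w,\til w$ of $Pw=0$ on $I$, the quantity $\til w(x)^{*}\,\ope{diag}(1,-1)\,w(x)$ is constant in $x$ because $\ope{diag}(1,-1)H(\e,\e)$ is Hermitian under \eqref{eq:Hermite2}. This is a genuinely different argument from the paper's. The paper does not use flux conservation at all; instead it exploits a complex-conjugation symmetry built into the explicit integral-operator construction of Section~\ref{Sec:Proof1}: under \eqref{eq:Hermite2} one has $u_+K_{1,\dir}U_1f=-\overline{u_+K_{2,\dir}U_2\overline f}$, whence by induction $u_+w_{2,\dir}=\bigl(\begin{smallmatrix}0&1\\1&0\end{smallmatrix}\bigr)\overline{u_+w_{1,\dir}}$, and this forces $T_{\ope{ex}}=\bigl(\begin{smallmatrix}\tau_1&\overline{\tau_2}\\\tau_2&\overline{\tau_1}\end{smallmatrix}\bigr)$, from which unitarity of $S$ is read off directly using $\det T_{\ope{ex}}=1$. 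Your flux argument is more conceptual and would apply verbatim to any first-order system with $\sigma H=H^*\sigma$; the paper's argument is tied to the explicit construction but yields the structural form of $T$ (not just the $U(1,1)$ relations) in one stroke.

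The only place your write-up is looser than it needs to be is the ``upgrade'' from asymptotic to exact unitarity. There is no upgrading to do: you should run the flux argument not on microlocal solutions but directly on the exact bases $(w_{1,\ell},w_{2,\ell})$ and $(w_{1,r},w_{2,r})$ of Section~\ref{Sec:Proof1}. By \eqref{eq:Charactelization} and $|u_j|=1$, one has $(w_{1,\dir},w_{2,\dir})^*(x_\dir)\,\ope{diag}(1,-1)\,(w_{1,\dir},w_{2,\dir})(x_\dir)=\ope{diag}(1,-1)$ for $\dir=\ell,r$; constancy of the flux then gives $T_{\ope{ex}}^{*}\,\ope{diag}(1,-1)\,T_{\ope{ex}}=\ope{diag}(1,-1)$ \emph{exactly}, with no $\ord(\mu_{m,n}^2)$ error and no need to invoke Theorem~\ref{thm:Asymptot-T2}. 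Since the paper fixes the representative with $\det T=1$ precisely as $T_{\ope{ex}}$, this already finishes the proof.
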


\begin{remark}
By substituting one of \eqref{eq:Hermite1} and \eqref{eq:Hermite2} into the asymptotic formula of $T$ given in Theorem~\ref{thm:Asymptot-T}, one has
\be
S=
\ope{Id}_2-i\e h^{-\frac{m-n}{m+1}}\left(T_{\ope{sub}}+\ord\left(h^{\frac{1}{m+1}}\right)\right)
+\ord\left(\mu_{m,n}^2
\begin{pmatrix}
1&\e h^{-\frac{m-n}{m+1}}\\\e h^{-\frac{m-n}{m+1}}&1
\end{pmatrix}\right)
\ee
with the constant $2\times2$-matrix 
\be
T_{\ope{sub}}=\begin{pmatrix}0&\overline{\omega_{m,n}}\\
\omega_{m,n}&0\end{pmatrix},
\quad
\omega_{m,n}=\til{\omega}_{m,n}^0(\overline{W},V_2-V_1).
\ee
\end{remark}

\begin{remark}\label{rem:Unitary-SM3}
Without the second condition $\e_1=\e_2$ in \eqref{eq:Hermite1} or in \eqref{eq:Hermite2}, we have an equivalence of System \eqref{eq:System} with that given by a pseudodifferential operator with Hermitian matrix-valued symbol
\ben
\begin{pmatrix}
\xi+V_1(x)&\e_1 W(x)\\
\e_1 \overline{W(x)}&\pm\frac{\e_1}{\e_2}(\xi+V_2(x))
\end{pmatrix}.
\een
In this case, we again have the unitarity of the local scattering matrix by replacing the tuple $(f_{1,\ell},f_{2,\ell},f_{1,r},f_{2,r})$ of bases by
\ben
\left(\sqrt[4]{\frac{\e_2}{\e_1}}\,f_{1,\ell},\,\sqrt[4]{\frac{\e_1}{\e_2}}\,f_{2,\ell},\,\sqrt[4]{\frac{\e_2}{\e_1}}\,f_{1,r},\,\sqrt[4]{\frac{\e_1}{\e_2}}\,f_{2,r}\right).
\een
\end{remark}

\section{Linear relationship between exact solutions to the system}\label{Sec:Proof1}
In this section, we solve System \eqref{eq:System} locally in an interval $I=]x_\ell,x_r[$ near $x=0$ $(x_\ell<0<x_r)$, and prove Proposition~\ref{thm:T} which gives the asymptotic behavior of the transfer matrix between two bases 
of exact solutions to the system. The argument in this section is not microlocal, and we will discuss on the correspondence between the exact solutions and the microlocal solutions in the next section. We obtain  Theorems~\ref{thm:Asymptot-T} and \ref{thm:Asymptot-T2} by combining Proposition~\ref{thm:T} and the microlocal argument in the next section.

We assume that $U_1,$ $U_2\in C_c^\infty(I)$ without loss of generalities, in addition to Conditions~\ref{C1} and \ref{C2}. In other words, we consider the operator $P$ with $U_1$ and $U_2$ replaced by $\chi U_1$ and $\chi U_2$, where $\chi\in C_c^\infty(I;[0,1])$ stands for a cut-off function such that $\chi(x)=1$ identically near $x=0$. Then solutions to the equation $Pw=0$ also solves the original equation near $x=0$ where $\chi=1$ identically.
We construct two bases $(w_{1,\ell},w_{2,\ell})$ and $(w_{1,r},w_{2,r})$ of solutions to the system in $I$ admitting the initial value 
\be\label{eq:Charactelization}
(w_{1,\dir},w_{2,\dir})\big|_{x=x_\dir}=
\exp\left(-\frac ih\int_0^{x_\dir}\ope{diag}(V_1(x),V_2(x))\,dx\right)\qquad
(\dir=\ell,r).
\ee
at $x=x_\ell$ and $x=x_r$, respectively, and compute the transfer matrix $T_{\ope{ex}}$  between these two bases (Proposition~\ref{thm:T}). 
We will show in Proposition~\ref{prop:w-f} that $w_{j,\dir}$ behaves as a basis of the space of microlocal solutions on $\gamma_{j,\dir}$ satisfying the characterization conditions \eqref{eq:mlBasis-f} and \eqref{eq:mlBasis-fab}, and it is microlocally zero near $\gamma_{\hat j,\dir}$ ($\hat j$ stands for the other index than $j$ out of $1$ and $2$). This with the fact that the space of microlocal solutions near $\rho_0$ is two-dimensional (Proposition~\ref{prop:2dim}) implies that $T_{\ope{ex}}$ is a representative of the equivalence class of $T$ in Theorems~\ref{thm:Asymptot-T} and \ref{thm:Asymptot-T2}.
Recall that the Wronskian $\cW(w_1,w_2)=\det(w_1,w_2)$ of any two solutions $w_1$ and $w_2$ to \eqref{eq:System} satisfies the ordinary differential equation $u'=(\ope{tr}V)u$, and it follows from \eqref{eq:Charactelization} that 
\be\label{eq:DetT}
\det T_{\ope{ex}}=1.
\ee

\begin{proposition}\label{thm:T}
Under the same condition as Theorem~\ref{thm:Asymptot-T}, the transfer matrix $T_{\ope{ex}}$ between the bases $(w_{1,\ell},w_{2,\ell})$ and $(w_{1,r},w_{2,r})$, i.e., $(w_{1,\ell},w_{2,\ell})=(w_{1,r},w_{2,r})T_{\ope{ex}}$, admits the same asymptotic behavior as $T$ given in Theorems~\ref{thm:Asymptot-T} and \ref{thm:Asymptot-T2}. Moreover, since we take a particular bases,  a better estimate
\ben
t_{11}-1=\ord(\min\{\mu_{m,\til{n}}^2,\til{\mu}_1^2h^{-\frac{m-n_2-\iota(mn_2)}{m+1}}\}),\quad
t_{22}-1=\ord(\min\{\mu_{m,\til{n}}^2,\til{\mu}_1^2h^{-\frac{m-n_1-\iota(mn_1)}{m+1}}\}),
\een
makes sense.  We here put $\iota(k)=0$ for $k$ even and $\iota(k)=1$ for $k$ odd. This estimate is strictly better when one of $n_2>m$ ($n_1>m$ for $t_{22}$) and $n_1=n_2=m$ holds. 
\end{proposition}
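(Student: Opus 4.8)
\emph{Reduction.} I would first remove the diagonal part: writing $w=\mathcal E(x)z$ with $\mathcal E(x)=\exp\!\bigl(-\tfrac ih\int_0^x\ope{diag}(V_1,V_2)\,dy\bigr)$ turns $P(\e_1,\e_2,h)w=0$ into the anti-diagonal first-order system $z'=A(x)z$, with
\[
A(x)=-\frac ih\begin{pmatrix}0&\e_1U_1(x)e^{i\Phi(x)/h}\\ \e_2U_2(x)e^{-i\Phi(x)/h}&0\end{pmatrix},\qquad\Phi(x):=\int_0^x\bigl(V_1-V_2\bigr)(y)\,dy.
\]
By \eqref{eq:Charactelization} the solution matrices $(w_{1,\dir},w_{2,\dir})$ become, after this conjugation, the fundamental matrices $Z_\dir$ of $z'=Az$ normalized by $Z_\dir(x_\dir)=\ope{Id}_2$; since $(w_{1,\ell},w_{2,\ell})=(w_{1,r},w_{2,r})T_{\ope{ex}}$ this gives $T_{\ope{ex}}=Z_\ell(x_r)$, that is, $T_{\ope{ex}}=\ope{Id}_2+\sum_{N\ge1}\int_{\Delta_N}A(s_1)\cdots A(s_N)\,ds$ with $\Delta_N:=\{x_\ell<s_N<\dots<s_1<x_r\}$. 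As $A$ is anti-diagonal, even-length products are diagonal and odd-length ones anti-diagonal, so $t_{11},t_{22}$ are the even-order and $t_{12},t_{21}$ the odd-order parts of this series, each an iterated integral of the scalar amplitudes $a_1=-\tfrac{i\e_1}h\chi U_1e^{i\Phi/h}$, $a_2=-\tfrac{i\e_2}h\chi U_2e^{-i\Phi/h}$ taken alternately; the identity $\det T_{\ope{ex}}=1$ of \eqref{eq:DetT} accompanies this.

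\emph{Leading term and the basic oscillatory estimate.} The order-one term of $t_{12}$ is $-\tfrac{i\e_1}h\int_I\chi U_1e^{i\Phi/h}$; since $\Phi$ has at $x=0$ a critical point of order exactly $m+1$ (because $(V_1-V_2)^{(k)}(0)=0$ for $k<m$ and $\neq0$ for $k=m$) and $U_1$ vanishes there to order $n_1$, the degenerate stationary phase method (Lemma~\ref{lem:D-S-Ph}) evaluates it as $-i\e_1h^{-(m-n_1)/(m+1)}\,\til{\omega}_{m,n_1}(h;U_1,V_1-V_2)$ with the asymptotics \eqref{eq:Def-til-omega}--\eqref{eq:eta}, and symmetrically for $t_{21}$. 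The heart of the matter is the uniform bound that adding any two alternating integrations multiplies the iterated integral by $\ord(\mu_{m,\til{n}}^2)$. I would establish this by integrating from the innermost variable outward: each step is an oscillatory integral $\int\chi U_{j_l}(s)\,e^{\pm i\Phi(s)/h}\,g_l(s)\,ds$ whose amplitude $g_l$ is the partial iterate accumulated so far, which I would split into the $\ord(h^{1/(m+1)})$-neighborhood of $0$ --- bounded by the supremum of $g_l$ times the length of the interval, with an extra gain from the vanishing order of $U_{j_l}$ --- and its complement, where one integrates by parts against $\Phi'=V_1-V_2$ (which is $\gtrsim|s|^m$ there). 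After the first integration $g_l$ is no longer smooth and develops a transition layer of width $\sim h^{1/(m+1)}$ near $0$; checking that this costs only lower-order terms is what yields the per-pair gain $\ord(\mu_{m,\til{n}}^2)$ --- which carries a single logarithm precisely at $2\til{n}+1=m$ and equals $\til{\e}^2h^{-1}$ without a logarithm once $2\til{n}+1>m$, whence the case split in the definition of $\mu_{m,l}$. Taking $\mu^0$ small, the Peano--Baker series converges with remainders exactly as in Theorems~\ref{thm:Asymptot-T} and~\ref{thm:Asymptot-T2}: $t_{12}=-i\e_1h^{-(m-n_1)/(m+1)}(\til{\omega}_{m,n_1}+\ord(\mu_{m,\til{n}}^2))$, $t_{21}$ symmetrically, and $t_{11},t_{22}=1+\ord(\mu_{m,\til{n}}^2)$.

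\emph{The sharpened diagonal estimate.} For a $2k$-fold term of $t_{11}-1$ I would carry out only the innermost integration first, using that $x\mapsto\int_{x_\ell}^x\chi U_2e^{-i\Phi/h}$ is $\ord(h^{(n_2+1)/(m+1)})$ \emph{uniformly} in $x$ --- a partial stationary-phase bound that sees only $n_2$ --- and then bound the remaining $(2k-1)$-fold integral by its $\ord(1)$ $L^1$-size for $k=1$ and by the order-$(2k-1)$ estimate of the preceding step for $k\ge2$. Summing the geometric series gives $t_{11}-1=\ord(\til{\mu}_1^2h^{-(m-n_2)/(m+1)})$; a parity check --- when $mn_2$ is odd a symmetry forces the leading stationary-phase coefficient to vanish, so that the relevant integral gains a further factor $h^{1/(m+1)}$ --- replaces the exponent by $m-n_2-\iota(mn_2)$. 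Intersecting with the generic bound $\ord(\mu_{m,\til{n}}^2)$ yields the stated minimum, and interchanging the roles of $1$ and $2$ gives $t_{22}$.

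\emph{Main obstacle.} The delicate point is the uniform iterated-oscillatory-integral bound of the second step, uniformly in $(\e_1,\e_2,h)$: in the regime $2\til{n}+1\ge m$ the naive single-integral gain would beat $h^{1/2}$ and one must instead produce and carefully track the logarithmic factors, while controlling how the non-smooth transition layers created by the inner partial integrals interact with the outer oscillations without destroying cancellation. The bookkeeping in the third step --- cleanly separating the ``generic'' contribution from the ``sharpened'' one and extracting the parity gain --- is the other careful point; the remaining ingredients (the stationary-phase asymptotics of $\til{\omega}_{m,n}$, the algebra behind $\det T_{\ope{ex}}=1$, and the passage to microlocal solutions done in the next section) are routine.
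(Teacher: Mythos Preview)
Your proposal is correct and follows essentially the same route as the paper: both conjugate away the diagonal and expand the fundamental matrix as an alternating Neumann/Peano--Baker series, extract the leading off-diagonal terms by degenerate stationary phase (Lemma~\ref{lem:D-S-Ph}), and control the higher iterates by exactly the split-near-zero-plus-integration-by-parts scheme you describe, which the paper packages as Lemma~\ref{lem:EstiOsci}. The one organizational difference is that the paper replaces your ``transition layer'' heuristic by the weighted norms $\|f\|_{n,q}=\sup_I|f|+h^{q}\sup_I|x^{-n}f'|$, which encode cleanly the key fact that the derivative of each partial integral inherits the $n_j$-th order vanishing of $U_j$ at $0$; this turns the iterated bound into an operator-norm estimate on $C^1_{n_j}(I)$ (Proposition~\ref{prop:ConvS} and the proof of Lemma~\ref{lem:AandA0}) and makes the appearance of the single logarithm at $2\til n+1=m$ transparent.
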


\subsection{Construction of the two bases of solutions}
We apply the idea of construction due to \cite{AFH1,FMW1} for a matrix Schr\"odinger operator to our problem. 
Our construction is much simpler than theirs. 

Let us fix a solution
\ben
u_j(x)=\exp\left(-\frac ih\int_0^xV_j(y)dy\right)\quad(j=1,2)
\een
to the scalar homogeneous equation $(hD_x+V_j)u_j=0$. It is well-known that for $j=1,2$,
the integral operator 
\be
K_jf(x):=\frac ih u_j(x)\int_{x_j}^x \frac{f(y)}{u_j(y)}dy\qquad(f\in C(I))
\ee
with $x_j\in\bar{I}$ gives a fundamental solution with the initial condition $K_jf(x_j)=0$.
%
We define the solutions $w_1$ and $w_2$ as follows
\be\label{eq:Def-w1}
w_1=w_1(x_1,x_2):=
\begin{pmatrix}
\displaystyle\sum_{k\ge0}(\e_1\e_2 K_1U_1K_2U_2)^k u_1\\
\dip-\e_2 K_2U_2\sum_{k\ge0}(\e_1\e_2 K_1U_1K_2U_2)^k u_1
\end{pmatrix},
\ee
and 
\be\label{eq:Def-w2}
w_2=w_2(x_1,x_2):=
\begin{pmatrix}
\dip-\e_1 K_1U_1\sum_{k\ge0}(\e_1\e_2 K_2U_2K_1U_1)^k u_2\\
\dip\sum_{k\ge0}(\e_1\e_2 K_2U_2K_1U_1)^k u_2
\end{pmatrix}.
\ee
Put
\be\label{eq:DefExSol}
w_{j,\ell}:=w_j(x_\ell,x_\ell),\quad
w_{j,r}:=w_j(x_r,x_r)\quad\text{for $j=1,2$.}
\ee
The following proposition shows the convergence of the infinite series in \eqref{eq:Def-w1} and \eqref{eq:Def-w2} for small $\mu_{m,\til{n}}>0$.

\begin{proposition}\label{prop:ConvS}
For any $x_1$, $x_2\in \bar{I}$, each one of the infinite sums 
\ben
\sum_{k\ge0}(\e_1\e_2K_1U_1K_2U_2)^k u_1\quad\text{and}\quad
\sum_{k\ge0}(\e_1\e_2 K_2U_2K_1U_1)^k u_2
\een
converges uniformly to a $C^1$-function when  $h>0$ and $\mu_{m,\til{n}}>0$ are small enough. 
\end{proposition}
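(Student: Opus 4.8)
The plan is to estimate the operator norms of $\e_1\e_2 K_1U_1K_2U_2$ and $\e_1\e_2 K_2U_2K_1U_1$ on a suitable Banach space and show they are contractions of size $\ord(\mu_{m,\til n}^2)$, so that the Neumann series converge geometrically. The natural space is $C(\bar I)$ (or the $C^1$-functions, once convergence in $C^0$ is established the $C^1$ bound follows from the equation), but the subtlety is that $U_1,U_2$ are supported near $x=0$ while $u_j$ is merely of modulus one; the operators $K_j$ are \emph{not} bounded on $C(\bar I)$ uniformly in $h$ without the oscillatory gain. So the first step is to make precise that only the product $K_1U_1K_2U_2$ (not $K_j$ alone) enjoys a good bound, using the oscillation of $u_1/u_2 = \exp(\frac ih\int_0^x(V_2-V_1))$, whose phase has a critical point of order $m$ at $x=0$ by Condition~\ref{C1}.

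Concretely, I would first write out, for $f\in C(\bar I)$,
\be
\e_1\e_2\, K_1U_1K_2U_2 f(x)
= -\frac{\e_1\e_2}{h^2}\,u_1(x)\int_{x_1}^x U_1(y)\,\overline{u_1(y)}u_2(y)\!\!\int_{x_2}^y U_2(z)\,\overline{u_2(z)}u_1(z)\,f(z)\,dz\,dy,
\ee
and note $\overline{u_1}u_2=\exp(\frac ih\int_0^x(V_1-V_2))$ while $\overline{u_2}u_1$ is its conjugate. The inner integral in $z$ is then an oscillatory integral against the amplitude $U_2(z)f(z)$ with phase $\frac1h\int_0^z(V_2-V_1)$; integrating by parts, or invoking the van der Corput / degenerate stationary phase estimate of Lemma~\ref{lem:D-S-Ph}, gives a gain of $h^{1/(m+1)}$ per oscillatory factor when the amplitude is smooth and supported near the critical point, with a logarithmic loss exactly in the borderline case captured by the $\dl_{2l+1,m}$ factor in $\mu_{m,l}$. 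Doing this for both nested integrals and tracking the powers of $\e_j$, $h$ and the vanishing orders $n_1,n_2$ of $U_1,U_2$ (which improve the amplitude behavior at $0$ and hence the stationary-phase exponent), I expect the bound
\be
\bigl\|\e_1\e_2\, K_1U_1K_2U_2\bigr\|_{C(\bar I)\to C(\bar I)} \le C\,\til\e^2\, h^{-1+2(\til n+1)/(m+1)} = C\,\mu_{m,\til n}^2
\ee
(up to the logarithmic correction in the critical case), and similarly for $K_2U_2K_1U_1$. Once this operator norm is $<1$ — i.e. for $\mu_{m,\til n}<\mu^0$ and $h<h_0$ — the Neumann series converges geometrically in $C(\bar I)$, uniformly in $x_1,x_2\in\bar I$; applying $hD_x$ termwise and using the system \eqref{eq:System2} to express derivatives in terms of the functions themselves upgrades this to $C^1$ convergence.

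The main obstacle I anticipate is the oscillatory-integral bookkeeping for the \emph{doubly} nested integral: the inner $z$-integral produces not a constant but a function of the outer variable $y$, and one must check that this function is still smooth with controlled derivatives near $y=0$ (so that the second, $y$-integration again sees a nice amplitude against the conjugate oscillatory phase) rather than merely bounded. This is where the precise form of $\mu_{m,l}$ with its case split at $2l+1 = m$ comes from, and where a naive $L^\infty$ estimate would lose too much. I would handle it by first proving a clean one-variable lemma: for $g\in C^\infty$ supported near $0$ with a zero of order $\ge n$ at $0$, the function $y\mapsto \int_{x_j}^y g(z)e^{\frac ih\phi(z)}dz$, with $\phi$ having an order-$m$ critical point at $0$, is $\ord(h^{(n+1)/(m+1)})$ in $C^0$ with analogous bounds on a few derivatives (the logarithm entering when $n+1$ and $m+1$ are in the critical ratio); then the nested estimate follows by applying this lemma twice. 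The rest — geometric summation, uniformity in $x_1,x_2$, and the passage $C^0\to C^1$ — is routine.
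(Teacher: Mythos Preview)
Your strategy is essentially the paper's: show that (after conjugating by $u_1$, resp.\ $u_2$) the operator $\e_1\e_2\,K_1U_1K_2U_2$ has norm $\ord(\mu_{m,\til n}^2)$ on a suitable Banach space and sum the Neumann series. The one refinement the paper makes over your outline is the choice of space: a pure $C(\bar I)$ bound does not close, because the oscillatory gain in the outer $y$-integral comes from one integration by parts, which in turn needs control on the \emph{derivative} of the inner integral as a function of $y$ --- this is exactly the obstacle you name, and it means the $C^0$ estimate cannot be obtained first and the $C^1$ bound ``upgraded'' afterwards. The paper therefore works from the start in the weighted $C^1$ norm
\[
\|f\|_{n,q}:=\sup_I|f|+h^q\sup_I|x^{-n}f'|,
\]
and proves the one-variable oscillatory lemma (Lemma~\ref{lem:EstiOsci}) for $h$-dependent $C^1$ amplitudes $a_h$, tracking both $\sup|a_h|$ and $\sup|x^{-l_2}a_h'|$. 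Your ``clean one-variable lemma for $g\in C^\infty$'' is the right idea but must be stated in this form to iterate, since after one application the new amplitude is $h$-dependent with derivative of size $h^{-1}$. With that in place the contraction estimate goes through exactly as you sketch (note your displayed exponent should be $-2+2(\til n+1)/(m+1)$, which is what matches $\mu_{m,\til n}^2$), and $C^1$ convergence is built in rather than a separate step.
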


To show the convergence for small $\mu_{m,\til{n}}$, we need a sharper estimate for oscillatory integrals than the estimates in \cite{AFH1,FMW1} where they considered the case with $\e_1=\e_2=h$ (remark also that their operator is of second order).
\begin{lemma}\label{lem:EstiOsci}
Let $I\subset\R$ be a compact interval containing $0$ in its interior, and let $a_h\in C^1(I)$, $\phi\in C^\infty(I;\R)$, $l_1,\,l_2\in\N=\{0,1,2,\ldots\}$. Suppose that $\phi'$ vanishes of $k$-th order $(k\ge1)$ at $x=0$ and does not vanish at any other point in $I$, and that $x^{-l_2}a_h'(x)$ is bounded on $I$ for each $h>0$. Then there exists $C>0$ independent of $h$, $\phi$, $a_h$ such that
\ben
\begin{aligned}
&\frac 1h\left|\int_I x^{l_1} a_h(x)e^{i\phi(x)/h}dx\right|\\
&\le
C\left(\sup_I\left|a_h\right|h^{-\left(\frac{k-l_1}{k+1}\right)_+}
+\sup_I\left|x^{-l_2}a_h'\right|
\left(\log\frac1h\right)^{\dl_{l_1+l_2+1,k}}
h^{-\left(\frac{k-l_1}{k+1}-\frac{l_2+1}{k+1}\right)_+}
\right)\\
&\le \left\{
\begin{aligned}
&Ch^{-\frac{k-l_1}{k+1}}
\left(\sup_I\left|a_h\right|+\left(\log\frac1h\right)^{\dl_{l_1+l_2+1,k}}h^{\frac{l_2+1}{k+1}}\sup_I\left|x^{-l_2}a_h'\right|\right)
&&\text{when }l_1+l_2+1\le k,
\\
&C
\left(h^{-\frac{k-l_1}{k+1}}\sup_I\left|a_h\right|+\sup_I\left|x^{-l_2}a_h'\right|\right)
&&\text{when }l_1+l_2\ge k \ge l_1,
\\
&C\left(\sup_I\left|a_h\right|+\sup_I\left|x^{-l_2}a_h'\right|\right)
&&\text{when }l_1>k,
\end{aligned}\right.
\end{aligned}
\een
uniformly for small $h>0$.
\end{lemma}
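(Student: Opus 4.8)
The plan is to reduce everything to a single model oscillatory integral by a change of variables straightening out the phase, then to split the integration into a "stationary" window $|x|\le h^{1/(k+1)}$ and a "non-stationary" region, handling the latter by one integration by parts. First I would observe that, since $\phi'$ vanishes exactly to order $k$ at $0$ and nowhere else in the compact interval $I$, we may write $\phi'(x)=x^k g(x)$ with $g\in C^\infty(I)$, $g$ bounded away from $0$; thus $|\phi'(x)|\gtrsim |x|^k$ and $|\phi(x)-\phi(0)|\gtrsim |x|^{k+1}$ on $I$. All constants below depend only on $I$ and lower bounds for $|g|$, hence are uniform in $h$, $\phi$, $a_h$ in the stated sense (one should state the uniformity hypothesis on $\phi$ a touch more carefully — "uniformly over $\phi$ with $\phi'(x)=x^kg(x)$, $c_0\le|g|\le c_1$" — but I will take this as the intended meaning).

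Next I would split $\int_I x^{l_1}a_h(x)e^{i\phi/h}dx = \int_{|x|\le\delta_h} + \int_{|x|>\delta_h}$ with $\delta_h:=h^{1/(k+1)}$. On the inner piece, estimate trivially: $\frac1h\bigl|\int_{|x|\le\delta_h}x^{l_1}a_h e^{i\phi/h}dx\bigr|\le \frac1h\,\sup_I|a_h|\int_{|x|\le\delta_h}|x|^{l_1}dx\lesssim h^{-1}\sup_I|a_h|\,\delta_h^{l_1+1}=h^{-(k-l_1)/(k+1)}\sup_I|a_h|$, which already produces the first term on the right-hand side (with the convention that the exponent is replaced by $0$ when $l_1\ge k$, since then one should instead bound $\int_{|x|\le\delta_h}|x|^{l_1}dx$ against a constant; more precisely one takes $\delta_h=\min\{h^{1/(k+1)},\operatorname{diam}I\}$ and uses $h^{-1}\delta_h^{l_1+1}\lesssim h^{-(( k-l_1)/(k+1))_+}$). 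On the outer piece, write $e^{i\phi/h}=\frac{h}{i\phi'}\frac{d}{dx}e^{i\phi/h}$ and integrate by parts once; the boundary terms at $x=\pm\delta_h$ and at $\partial I$ are $\lesssim \sup_I|a_h|\,\delta_h^{l_1}/|\phi'(\delta_h)|\lesssim \sup_I|a_h|\,\delta_h^{l_1-k}=\sup_I|a_h|\,h^{-(k-l_1)/(k+1)}$, and the remaining integral is
\[
\frac1i\int_{|x|>\delta_h}\frac{d}{dx}\!\left(\frac{x^{l_1}a_h(x)}{\phi'(x)}\right)e^{i\phi/h}\,dx,
\]
no longer carrying a factor $h^{-1}$. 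Differentiating the quotient produces three types of terms: $x^{l_1-1}a_h/\phi'$, $x^{l_1}a_h'/\phi'$, and $x^{l_1}a_h\phi''/(\phi')^2$. Using $|\phi'|\gtrsim|x|^k$, $|\phi''|\lesssim|x|^{k-1}$, the first and third are $\lesssim \sup_I|a_h|\int_{\delta_h}^{\operatorname{const}}|x|^{l_1-k-1}dx$, which is $\lesssim \sup_I|a_h|\,\delta_h^{l_1-k}$ when $l_1<k$ and $\lesssim\sup_I|a_h|(\log(1/h))^{\delta_{l_1,k}}$-type bounds at the borderline — this is exactly where the logarithm enters when $l_1=k$, and where the exponent saturates at $0$ when $l_1>k$. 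For the middle term I would insert the hypothesis $|a_h'(x)|\le |x|^{l_2}\sup_I|x^{-l_2}a_h'|$, getting $\lesssim \sup_I|x^{-l_2}a_h'|\int_{\delta_h}^{\operatorname{const}}|x|^{l_1+l_2-k}dx$, which evaluates to $\delta_h^{l_1+l_2-k+1}=h^{-(\cdots)/(k+1)}$ when $l_1+l_2+1<k$, to $\log(1/h)$ when $l_1+l_2+1=k$ (the Kronecker-delta factor $\delta_{l_1+l_2+1,k}$), and to a constant when $l_1+l_2+1>k$. Collecting the four displayed regimes then follows by bookkeeping: multiplying out, using $\delta_h^{j}=h^{j/(k+1)}$, and replacing each negative exponent by its positive part $(\,\cdot\,)_+$.

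The main obstacle is the careful borderline analysis: making sure the logarithmic losses appear in exactly the stated places (only on the $\sup|x^{-l_2}a_h'|$-term, with the Kronecker delta $\delta_{l_1+l_2+1,k}$, not on the $\sup|a_h|$-term), and that the choice of the cutoff scale $\delta_h=h^{1/(k+1)}$ is simultaneously optimal for both the trivial inner estimate and the integration-by-parts outer estimate — any other power of $h$ would lose in one of the two pieces. A secondary subtlety is that $a_h$ is only $C^1$, so one cannot iterate the integration by parts; a single integration by parts must suffice, which is why the hypothesis is phrased in terms of $a_h$ and $a_h'$ alone and why the $\sup_I|x^{-l_2}a_h'|$ factor carries its own (better) $h$-power $h^{(l_2+1)/(k+1)}$ relative to the $\sup_I|a_h|$ term. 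Finally one should double-check the edge case $l_1>k$ (phase stationary point of lower order than the vanishing of the amplitude), where the integral is in fact $O(1)$ with no $h$-blowup, consistent with the third branch of the displayed estimate.
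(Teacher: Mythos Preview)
Your approach is essentially the paper's: split at scale $\delta_h=h^{1/(k+1)}$, bound the inner piece trivially, and integrate by parts once on the outer piece. The only cosmetic difference is that you use a sharp cutoff (so boundary terms appear at $|x|=\delta_h$), whereas the paper uses a smooth cutoff $\chi_h$ supported in $\{|x|\le 2\delta_h\}$ with $|\chi_h'|\lesssim \delta_h^{-1}$ (so a $\chi_h'$-term appears instead); these contributions are of the same size.

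There is, however, a real gap at the borderline $l_1=k$. You expand $\frac{d}{dx}\bigl(x^{l_1}a_h/\phi'\bigr)$ into three terms and bound the first and third separately by $\sup_I|a_h|\int_{\delta_h}|x|^{l_1-k-1}dx$, then write ``this is exactly where the logarithm enters when $l_1=k$''. That would put a $\log(1/h)$ factor on the $\sup_I|a_h|$ coefficient, contradicting the statement (and you flag this yourself as the ``main obstacle'' without resolving it). The resolution is a cancellation you are not exploiting: since $\phi'(x)=x^k g(x)$ with $g$ smooth and nonvanishing, one has $x^{l_1}/\phi'(x)=x^{l_1-k}/g(x)$, which for $l_1=k$ is simply $1/g(x)$, a $C^\infty$ function with \emph{bounded} derivative near $0$. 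Your ``first'' and ``third'' terms, each individually $\sim|x|^{-1}$, cancel at leading order. The paper handles this by singling out the case $l_1=k$ and differentiating $x^{k}/\phi'$ as a single bounded function (compare the integrand $\sup|a_h|$, not $|x|^{-1}\sup|a_h|$, in the paper's estimate for that case). Once you make this observation---equivalently, group the product rule as $\frac{d}{dx}\bigl(\tfrac{x^{l_1}}{\phi'}\bigr)\cdot a_h+\tfrac{x^{l_1}}{\phi'}\cdot a_h'$ and treat the first factor directly---the logarithm disappears from the $\sup_I|a_h|$ term and survives only on the $\sup_I|x^{-l_2}a_h'|$ term via $\int_{\delta_h}|x|^{l_1+l_2-k}dx$ when $l_1+l_2+1=k$, exactly as stated.
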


\begin{remark}\label{rem:repW}
Lemma~\ref{lem:EstiOsci} still holds even if $x^{l_1}$ is replaced by a $C^1$-function independent of $h$ which vanishes at $l_1$-th order at $x=0$. The functions $U_1$ or $U_2$ will take the place of $x^{l_1}$.
\end{remark}

\begin{proof}
Let $\chi_h\in C_c^\infty(I;[0,1])$ be such that $\chi_h(x)=1$ for $|x|\le h^{1/(k+1)}$ and $\chi_h(x)=0$ for $|x|\ge2h^{1/(k+1)}$. We divide the integral into two parts:
\be
\mc{I}_0:=\int_I \chi_h(x)x^{l_1}a_h(x)e^{i\phi(x)/h}dx,\qquad
\mc{I}_1:=\int_I (1-\chi_h(x))x^{l_1}a_h(x)e^{i\phi(x)/h}dx.
\ee
Then we have
\be\label{eq:EstiI0}
\left|\mc{I}_0\right|\le\sup_I\left|a_h\right|\int_{|x|\le2h^{1/(k+1)}}|x|^{l_1} dx
=C_{l_1}\sup_I\left|a_h\right| h^{\frac{l_1+1}{k+1}}.
\ee
On the support of $1-\chi_h$, the equality
\ben
e^{i\phi(x)/h}=\left(\frac h{i\phi'(x)}\right) \frac {d}{dx}e^{i\phi(x)/h}
\een
is valid since $\phi'$ does not vanish. 
An integration by parts shows the estimate
\ben
\left|\mc{I}_1\right|\le h\int_I\left|\frac{d}{dx}\left((1-\chi_h(x))\frac{x^{l_1}}{\phi'(x)}a_h(x)\right)\right|dx.
\een
One can impose also $\left|\chi_h'\right|\le h^{-1/(k+1)}$ on $\chi_h$. Taylor's formula shows $\phi'(x)=(k!)^{-1}\phi^{(k+1)}(0)x^{k}(1+\ord(x))$ for small $x$. By assumption, $\left|\phi'(x)\right|\ge c$ for some $c>0$ outside an $h$-independent neighborhood of $x=0$. Hence, we obtain
\be\label{eq:EstiI11}
\begin{aligned}
\left|\mc{I}_1\right|\le 
&Ch
\int_{|x|\ge h^{1/(k+1)}}\left(|x|^{l_1-k-1}\sup\left|a_h\right|+|x|^{l_1+l_2-k}\sup_I\left|x^{-l_2}a_h'\right|\right)dx
\\&\qquad
+C\sup\left|a_h\right|h^{\frac k{k+1}}\int_{\ope{supp}\chi_h'}|x|^{l_1-k}dx
\end{aligned}
\ee
when $l_1\neq k$, and 
\be\label{eq:EstiI12}
\left|\mc{I}_1\right|\le Ch
\int_{|x|\ge h^{1/(k+1)}}\left(\sup\left|a_h\right|+|x|^{l_2}\sup_I\left|x^{-l_2}a_h'\right|\right)dx
+C\sup\left|a_h\right|h^{\frac k{k+1}}\int_{\ope{supp}\chi_h'}dx.
\ee
when $l_1=k$. The required estimate follows from \eqref{eq:EstiI0}, \eqref{eq:EstiI11} and \eqref{eq:EstiI12}.
\end{proof}

\begin{proof}[Proof for Proposition~\ref{prop:ConvS}]
Let $C_n^1(I)$ $(n\in\N)$ be the set of continuously differentiable functions $f$ on $I$ such that $x^{-n}f'$ is bounded. We define $\|\cdot\|_{n,q}$ for $q\in\R$ by
\be
\|f\|_{n,q}:=\sup_I|f|+h^q\sup_I|x^{-n}f'|\qquad(f\in C^1_n(I)).
\ee
Let $f\in C_{n_1}^1(I)$ and $g\in C_{n_2}^1(I)$. 
By applying Lemma~\ref{lem:EstiOsci}, we obtain the following estimates
\be\label{eq:OpEstis}
\begin{aligned}
&\left\|(u_1)^{-1}K_1U_1K_2U_2(u_1f)\right\|_{n_1,\frac{n_1+1}{m+1}}
\le C\left(\log\frac1h\right)^{\dl_{n_1+n_2+1,m}}h^{-\frac{2m-n_1-n_2}{m+1}}
\|f\|_{n_1,\frac{n_1+1}{m+1}}
,\\
&\left\|(u_2)^{-1}K_2U_2K_1U_1(u_2g)\right\|_{n_2,\frac{n_2+1}{m+1}}
\le C\left(\log\frac1h\right)^{\dl_{n_1+n_2+1,m}}h^{-\frac{2m-n_1-n_2}{m+1}}
\|g\|_{n_2,\frac{n_2+1}{m+1}},
\end{aligned}
\ee
when $n_1+n_2\le m-1$, and 
\be
\begin{aligned}
&\left\|(u_1)^{-1}K_1U_1K_2U_2(u_1f)\right\|_{n_1,q_1}
\le Ch^{-1}
\|f\|_{n_1,q_1},
\\
&\left\|(u_2)^{-1}K_2U_2K_1U_1(u_2g)\right\|_{n_2,q_2}
\le Ch^{-1}
\|g\|_{n_2,q_2},
\end{aligned}
\ee
for any $q_1$ and $q_2$ satisfying
\be\label{eq:q1q2Cond}
\left\{
\begin{aligned}
&\frac{m-n_2}{m+1}\le q_1\le\frac{n_1+1}{m+1},
&&\frac{m-n_1}{m+1}\le q_2\le\frac{n_2+1}{m+1}
&&\text{when }n_1+n_2\ge m\ge n_1,\, n_2,
\\
&\frac{m-n_2}{m+1}\le q_1\le1,
&&0\le q_2\le \frac{n_2+1}{m+1}
&&\text{when }n_1> m\ge n_2,
\\
&0\le q_1\le1,
&&0\le q_2\le 1
&&\text{when }n_1,\,n_2>m.
\end{aligned}\right.
\ee
We here only show \eqref{eq:OpEstis}. The differences in $q_1$ and $q_2$ come only from those in Lemma~\ref{lem:EstiOsci}.
We have by definition
\ben
(u_2)^{-1}K_2U_2(u_1f)(x)=\frac ih\int_{x_2}^x \exp\left(\frac ih\int_0^y(V_2(t)-V_1(t))dt\right)U_2(y)f(y)dy.
\een
Thus, Lemma~\ref{lem:EstiOsci} (and Remark~\ref{rem:repW}) with $(k,l_1,l_2)=(m,n_2,n_1)$, $a_h=f$, $\phi(y)=\int_0^y(V_2(t)-V_1(t))dt$ is  applicable;
\be\label{eq:KU-esti1}
\sup_I \left|(u_2)^{-1}K_2U_2(u_1f)\right|\le C
h^{-\frac{m-n_2}{m+1}}\left(\sup_I\left|f\right|+\left(\log\frac1h\right)^{\dl_{n_1+n_2+1,m}}h^{\frac{n_1+1}{m+1}}\sup_I\left|x^{-n_1}f'\right|\right).
\ee
For the derivative, we have
\ben
\frac d{dx}\left((u_2)^{-1}K_2U_2(u_1f)\right)(x)=\frac ih \exp\left(\frac ih\int_0^x(V_2(t)-V_1(t))dt\right)U_2(x)f(x),
\een
and consequently
\be\label{eq:KU-esti2}
h^{\frac{n_2+1}{m+1}}\sup_I\left|x^{-n_2}\frac d{dx}\left((u_2)^{-1}K_2U_2(u_1f)\right)\right|\le Ch^{-\frac{m-n_2}{m+1}}\sup_I|x^{-n_2}U_2|\sup_I |f|.
\ee
The inequalities \eqref{eq:KU-esti1} and \eqref{eq:KU-esti2} with the boundedness of $\sup_I|x^{-n_2}U_2|$ gives the estimate 
\ben
\left\|(u_2)^{-1}K_2U_2(u_1f)\right\|_{n_2,\frac{n_2+1}{m+1}}\le C\left(\log\frac1h\right)^{\dl_{n_1+n_2+1,m}}h^{-\frac{m-n_2}{m+1}}\left\|f\right\|_{n_1,\frac{n_2+1}{m+1}}.
\een
The estimate \eqref{eq:OpEstis} is proved by repeating a symmetric process. Note that the logarithmic factor appears only in \eqref{eq:KU-esti1}, which is multiplied to the supremum of the derivative. Note also that this factor does not appear in the estimate \eqref{eq:KU-esti2} of the derivative. Therefore the logarithmic factor is multiplied at most once in \eqref{eq:OpEstis} even the above process is repeated twice.
Since $(u_2)^{-1}K_2U_2u_1$ belongs to $C_{n_2}^1(I)$ and $(u_1)^{-1}K_1U_1u_2$ to $C_{n_1}^1(I)$, the infinite series converge when $h$ and $\mu_{m,\til{n}}$ are small enough.
\end{proof}

\subsection{Asymptotic behavior of the transfer matrix: proof of Proposition~\ref{thm:T}}
We compute asymptotic behavior of the transfer matrix $T_{\ope{ex}}$. 
For convenience, we introduce the following notations;
\begin{align*}
&\zeta(h):=\left(\frac{\mu_{m,\til{n}}}{\til{\e}}\right)^2
=\left(\log\frac1h\right)^{\dl_{2\til{n}+1,m}}h^{-\left(\frac{m-2\til{n}-1}{m+1}\right)_+-1},\quad
(k)_+=\max\{k,0\},\\\
&\zeta_j(h):=h^{-1-\nu(m,n_j)},\quad\nu(m,n_j)=\frac{m-n_j-\iota(mn_j)}{m+1},\quad
\iota(k)=\frac{1-(-1)^k}2.
\end{align*}
The following lemma gives a representation of $T_{\ope{ex}}$.
\begin{lemma}\label{lem:AandA0}
We have
\be\label{eq:rep-T}
A:=T_{\ope{ex}}-\ope{Id}_2=\frac1{ih}\ope{diag}(u_1(x_r),u_2(x_r))^{-1}\int_IF(x)B(x)G(x)dx
\ee
with the matrix valued functions $F$, $G$ and $B$ given by 
\begin{align*}
&F=\begin{pmatrix}\frac{\e_1}{u_1}U_1&0\\0&\frac{\e_2}{u_2}U_2\end{pmatrix},\qquad
B=\begin{pmatrix}-\e_2K_{2,\ell}U_2&1\\1&-\e_1K_{1,\ell}U_1\end{pmatrix},\\
&G=
\sum_{k\ge0}(\e_1\e_2)^k\ope{diag}\left((K_{1,\ell}U_1K_{2,\ell}U_2)^ku_1,
\,(K_{2,\ell}U_2K_{1,\ell}U_1)^ku_2\right).
\end{align*}
Moreover, $A$ admits the asymptotic formula
\be\label{eq:asymptotics-A}
A=A_0+\ord
\left(\e_1\e_2
\begin{pmatrix}\min\{\zeta(h),\zeta_2(h)\}&\e_1 h^{-\nu(m,n_1)}\zeta(h)\\\e_2 h^{-\nu(m,n_1)}\zeta(h)&\min\{\zeta(h),\zeta_1(h)\}\end{pmatrix}\right),
\ee
where 
$A_0$ is given by
\ben
A_0=\frac1{ih}\int_IU(\e_1,\e_2,x)\ope{diag}\left(\frac{u_1(x)}{u_2(x)},\frac{u_2(x)}{u_1(x)}\right)dx.
\een
\end{lemma}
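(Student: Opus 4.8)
The plan is to obtain \eqref{eq:rep-T} by evaluating $T_{\ope{ex}}$ at the right endpoint $x_r$ and unwinding one Duhamel step, and then to extract $A_0$ from \eqref{eq:rep-T} by replacing the Neumann series $G$ by its leading term $\ope{diag}(u_1,u_2)$ and estimating the remainder with the operator bounds \eqref{eq:OpEstis}--\eqref{eq:q1q2Cond} and Lemma~\ref{lem:EstiOsci}. First, for \eqref{eq:rep-T}: since $K_{j,r}$ (lower endpoint $x_r$) annihilates functions at $x=x_r$, the definitions \eqref{eq:Def-w1}--\eqref{eq:DefExSol} give $(w_{1,r},w_{2,r})\big|_{x=x_r}=\ope{diag}(u_1(x_r),u_2(x_r))$, whence $T_{\ope{ex}}=\ope{diag}(u_1(x_r),u_2(x_r))^{-1}(w_{1,\ell},w_{2,\ell})\big|_{x=x_r}$. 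Each column $w_{j,\ell}$ solves \eqref{eq:System2} with Cauchy data $\ope{diag}(u_1(x_\ell),u_2(x_\ell))$ at $x_\ell$, hence obeys $w_{j,\ell}^{(k)}=\dl_{jk}u_k-\e_kK_{k,\ell}\!\left(U_kw_{j,\ell}^{(\hat k)}\right)$, with $\hat k$ the complementary index; applying this once produces the off-diagonal entries, and for the diagonal ones one uses $G_{kk}-u_k=\e_1\e_2K_{k,\ell}\!\left(U_kK_{\hat k,\ell}(U_{\hat k}G_{kk})\right)$. Isolating the outermost $K_{j,\ell}$, whose evaluation at $x_r$ reads $K_{j,\ell}f(x_r)=\tfrac ih u_j(x_r)\int_I f/u_j\,dy$, and combining with the prefactor from $T_{\ope{ex}}$, one arrives at \eqref{eq:rep-T}, with $B$ encoding the pattern ``$1$ off the diagonal, $-\e_{\hat j}K_{\hat j,\ell}U_{\hat j}$ on the diagonal'' and $G_{11},G_{22}$ the residual Neumann series.

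For \eqref{eq:asymptotics-A}, write $G=\ope{diag}(u_1,u_2)+\til G$ with $\til G=\sum_{k\ge1}(\e_1\e_2)^k\ope{diag}\!\left((K_{1,\ell}U_1K_{2,\ell}U_2)^ku_1,\,(K_{2,\ell}U_2K_{1,\ell}U_1)^ku_2\right)$. In $\tfrac1{ih}\int_IFBG\,dx$ the $\ope{diag}(u_1,u_2)$-part is $A_0$: its off-diagonal entry is already $\tfrac1{ih}\int_I\e_jU_j(u_{\hat j}/u_j)\,dx$, and its diagonal entry is an iterated integral $\int_I\!\int_{x_\ell}^x(\cdots)\,dz\,dx$ which, after Fubini, becomes $\tfrac1{ih}\int_IU_{jj}(\e_1,\e_2,x)(u_j/u_{\hat j})(x)\,dx$ with $U_{jj}$ a single remaining oscillatory integral in the complementary variable; together with the off-diagonal entries this is the matrix $U(\e_1,\e_2,x)$ and the factor $\ope{diag}(u_1/u_2,u_2/u_1)$ of the statement. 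The leftover $\tfrac1{ih}\int_IFB\til G\,dx$ is the error. One bounds $\til G_{jj}$ by \eqref{eq:OpEstis} (with the weights chosen via \eqref{eq:q1q2Cond}), getting $\til G_{jj}=\ord(\e_1\e_2\,\zeta(h))$ with the borderline logarithm $(\log(1/h))^{\dl_{2\til{n}+1,m}}$ absorbed in $\zeta$ and entering at most once, as in the proof of Proposition~\ref{prop:ConvS}; then Lemma~\ref{lem:EstiOsci} with Remark~\ref{rem:repW}, applied to the outer integral $\int_I(\e_jU_j/u_j)(\cdots)\,dx$ whose amplitude $U_j$ vanishes at $x=0$ of order $n_j$, supplies an extra $h^{-\nu(m,n_j)}$ off the diagonal and a bounded factor on the diagonal (where $B_{jj}$ already carries the decay). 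This gives the off-diagonal error and the diagonal error $\ord(\e_1\e_2\,\zeta)$.

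To sharpen the diagonal bound to $\min\{\zeta(h),\zeta_j(h)\}$ I would re-estimate the same remainder, now keeping $\til G_{jj}=\e_1\e_2K_{j,\ell}U_jK_{\hat j,\ell}U_{\hat j}G_{jj}$ unfactored and running Lemma~\ref{lem:EstiOsci} directly on the two nested oscillatory integrals, exploiting the vanishing orders $n_j$ and $n_{\hat j}$ of $U_j$ and $U_{\hat j}$ at $x=0$; this trades one power $h^{-1}$ for $h^{-\nu(m,n_{\hat j})}$, i.e.\ $\zeta$ for $\zeta_{\hat j}=h^{-1-\nu(m,n_{\hat j})}$, matching the index pattern of Proposition~\ref{thm:T}, and taking the better of the two gives the $\min$. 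I expect the main difficulty to be exactly this two-route bookkeeping of powers of $h$ and of $\e_1,\e_2$: one must check that both schemes remain valid under the single hypothesis $0<\mu_{m,\til{n}}<\mu^0$ (this is the role of the admissible ranges \eqref{eq:q1q2Cond} for the weights $\|\cdot\|_{n_j,q}$), and one must track the logarithmic factor, which can only come from the derivative term in Lemma~\ref{lem:EstiOsci} and hence appears at most once, precisely as in the proof of Proposition~\ref{prop:ConvS}.
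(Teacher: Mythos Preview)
Your derivation of the identity \eqref{eq:rep-T} is correct and matches the paper. The gap is in the asymptotic part, and it stems from a misreading of $A_0$: look at the displayed formula for $A_0$ immediately after the lemma --- it is \emph{off-diagonal}. The matrix $U(\e_1,\e_2,x)$ in the statement is simply the off-diagonal perturbation $\bigl(\begin{smallmatrix}0&\e_1U_1\\\e_2U_2&0\end{smallmatrix}\bigr)$; there are no $U_{jj}$'s, and no Fubini manipulation is needed. Consequently, the diagonal of $\frac1{ih}\int_I FB\,\ope{diag}(u_1,u_2)\,dx$ is \emph{not} part of $A_0$: it is the leading contribution to the diagonal \emph{error}, and in your decomposition $G=\ope{diag}(u_1,u_2)+\til G$ you have put it on the wrong side.

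The paper's proof organizes this directly rather than through the splitting $G=\ope{diag}(u_1,u_2)+\til G$. It introduces
\[
f_1=u_2^{-1}K_{2,\ell}U_2u_1,\qquad f_2=u_1^{-1}K_{1,\ell}U_1K_{2,\ell}U_2u_1,\qquad f_3=u_2^{-1}K_{2,\ell}U_2K_{1,\ell}U_1K_{2,\ell}U_2u_1,
\]
and observes that the $(1,1)$-entry of $A$ is $\e_1\e_2 f_2(x_r)$ plus higher iterates, while the first correction to the $(2,1)$-entry is $\e_1\e_2^2 f_3(x_r)$. The $\min\{\zeta,\zeta_2\}$ for the diagonal then comes from bounding $f_2$ in two ways: the crude pointwise bound $|f_2|\le h^{-1}\sup|U_1f_1|=\ord(h^{-1-\nu(m,n_2)})=\ord(\zeta_2)$, and the bound $\ord(\zeta)$ from Lemma~\ref{lem:EstiOsci} applied with $(l_1,l_2)=(n_1,n_2)$ and amplitude $f_1$. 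The higher iterates are then controlled by feeding the bounds on $\|f_2\|_{n_1,q_1}$ and $\|f_3\|_{n_2,q_2}$ into the operator estimates \eqref{eq:OpEstis}. Your ``sharpening'' paragraph has exactly the right two-route idea, but you are applying it to $\til G$; it must be applied to the piece you placed in $A_0$, namely to $f_2$ itself.
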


Recalling the definition \eqref{eq:Def-til-omega} of $\til{\omega}_{m,n}(W,Q)$, we obtain the following representation of $A_0$:
\ben
A_0=-i\begin{pmatrix}0&\e_1 h^{-\frac{m-n_1}{m+1}}\til{\omega}_{m,n_1}(U_1,V_1-V_2)\\
\e_2 h^{-\frac{m-n_2}{m+1}}\til{\omega}_{m,n_2}(U_2,V_2-V_1)&0
\end{pmatrix}.
\een
This with Lemma~\ref{lem:AandA0} ends the proof of Proposition~\ref{thm:T}.

\begin{proof}
Recall that by definition of $K_{j,\ell}$, we have 
\ben
K_{j,\ell}f(x_r)=\frac ih u_j(x_r)\int_{I}\frac{f(y)}{u_j(y)}dy\qquad(j=1,2).
\een
Substituting this into the definition of $w_{j,\ell}$ $(=1,2)$, we immediately obtain the equality
\ben
(w_{1,\ell}(x_r),w_{2,\ell}(x_r))=
\ope{diag}(u_1(x_r),u_2(x_r))(\ope{Id}_2+A).
\een
This with \eqref{eq:Charactelization} implies the identity \eqref{eq:rep-T} for $A$. 

Let us estimate the remainder terms. Put
\begin{align*}
&f_1:=(u_2)^{-1}K_2U_2u_1,\quad
f_2:=(u_1)^{-1}K_1U_1K_2U_2u_1=(u_1)^{-1}K_1U_1u_2f_1,\\
&f_3:=(u_2)^{-1}K_2U_2K_1U_1K_2U_2u_1=(u_2)^{-1}K_2U_2u_1f_2.
\end{align*}
We show the estimates 
\be\label{eq:RemainderEstis}
\left\|f_2\right\|_{n_1,q_1}=\ord(\min\{\zeta(h),\zeta_2(h)\}),\quad
\left\|f_3\right\|_{n_2,q_2}=\ord(h^{-\nu(m,n_2)}\zeta(h)),
\ee
for any $q_1$ and $q_2$ satisfying
\ben
q_1\ge\left\{
\begin{aligned}
&\frac{n_1+1}{m+1}&&\text{when }\ n_1+n_2+1\le m,\\
&\frac{m-n_2}{m+1}&&\text{when }\ n_1+n_2+1>m\ge n_2,\\
&0&&\text{when }\ n_2>m\ge n_2,
\end{aligned}\right.\qquad
q_2\ge\left\{
\begin{aligned}
&-\frac{m-n_2}{m+1}&&\text{when }\ n_2<m,\\
&0&&\text{when }\ n_2\ge m.
\end{aligned}
\right.
\een
Then the every term of the infinite sum is estimated inductively by the estimates \eqref{eq:OpEstis} of the operators between $C_{n_1}^1(I)$ and $C_{n_2}^1(I)$. Those estimates are applicable since the condition imposed on $q_1$ and $q_2$ here and that in \eqref{eq:q1q2Cond} have an intersection.

The degenerate stationary phase method (Lemma~\ref{lem:D-S-Ph}) gives us the estimate $\sup_I\left|f_1\right|=\ord(h^{-\nu (m,n_2)})$. 
Then for $f_2$, we have on one hand
\ben
\left|f_2\right|\le 
h^{-1}\sup_I\left|U_1f_1\right|=\ord(h^{-\nu (m,n_2)-1})=\ord(\zeta_2(h)).
\een
On the other hand, applying Lemma~\ref{lem:EstiOsci} with $\sup_I\left|x^{-n_2}f'_1(x)\right|=\ord(h^{-1})$, we also have
\ben
\left|f_2\right|=\ord\left(\left(\log\frac1h\right)^{\dl_{n_1+n_2+1,m}}h^{-\left(\frac{m-n_1-n_2-1}{m+1}\right)_+-1}\right)=\ord(\zeta(h)).
\een
Thus $\sup_I\left|f_2\right|=\ord(\min\{\zeta(h),\zeta_2(h)\})$ follows. The former one is better than (or the same as) the latter one when $n_2\le m$ whereas the latter is so when $n_2< m$. 
For the derivative, we have
\ben
\left|x^{-n_1}f'_2\right|\le 
h^{-1}\sup_I\left|f_1\right|=\ord(h^{
-\nu (m,n_2)-1})
=\ord(h^{-q_1}\min\{\zeta(h),\zeta_2(h)\}).
\een
This shows the estimate \eqref{eq:RemainderEstis} for $\|f_2\|_{n_1,q_1}$. 
For $\sup_I\left|f_3\right|$, we again compare the estimate $\sup_I\left|f_3\right|\le h^{-1}\sup_I\left|f_2\right|=\ord(h^{-1}\min\{\zeta(h),\zeta_2(h)\})$ with that obtained by applying Lemma~\ref{lem:EstiOsci}. Then the latter, $\sup_I\left|f_3\right|=\ord(h^{-\nu(m,n_2)}\zeta(h))$, is always better. We also have
\ben
\left|x^{-n_2}f_3'\right|\le h^{-1}\sup_I\left|f_2\right|
=\ord(h^{-1}\min\{\zeta(h),\zeta_2(h)\}).
\een
Thus, we finally obtain the estimate \eqref{eq:RemainderEstis} for $\|f_3\|_{n_2,q_2}$. 
\end{proof}

Let us recall the degenerate stationary phase (see e.g., \cite[Formulae (7.7.30) and (7.7.31)]{Ho}). 
\begin{lemma}\label{lem:D-S-Ph}
Let $a\in C_c^\infty(I)$, $\phi\in C^\infty(I)$ be independent of $h$, let $x=0$ be the unique zero in $I$ of $\phi'$, and let $k$ be its order. 
Then for any $N\in\N$, we have
\be\label{eq:DSP-o}
\int a(x)e^{i\phi(x)/h}dx=\frac2k\sum_{l<N}\frac{i^la_\phi^{(l)}(0)}{l!}\bm{\Gamma}\left(\frac{l+1}{k}\right)\cos\left(\frac{(1-(k-1)l)\pi}{2k}\right)h^{\frac{l+1}{k}}+\ord(h^{\frac{N+1}k})
\ee
as $h\to0^+$ for odd $k$, and 
\be\label{eq:DSP-e}
\int a(x)e^{i\phi(x)/h}dx=\frac2k\sum_{l<N}\frac{a_\phi^{(2l)}(0)}{(2l)!}\bm{\Gamma}\left(\frac{2l+1}{k}\right)e^{\ope{sgn}(\phi^{(k)}(0))i\pi (2l+1)/2k}h^{\frac{2l+1}{k}}+\ord(h^{\frac{2N+1}k})
\ee
as $h\to0^+$ for even $k$. Here, we define the function $a_\phi$ by
\ben
a_\phi(y(x))=\frac{a(x)}{y'(x)},\quad
y(x)=x\til{\phi}(x)^{1/k},\quad
\til{\phi}(x)=\frac{\ope{sgn}(\phi^{(k)}(0))}{x^k (k-1)!}\int_0^x (x-t)^{k-1}\phi^{(k)}(t)dt.
\een
\end{lemma}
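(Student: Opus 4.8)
The plan is to reduce the integral to the exact normal form that underlies Hörmander's degenerate stationary phase formula and then invoke \cite[Formulae (7.7.30), (7.7.31)]{Ho}. First I would insert a cut-off $\chi\in C_c^\infty(I;[0,1])$ equal to $1$ on a small neighborhood $J$ of $0$: on $\operatorname{supp}(1-\chi)$ the derivative $\phi'$ is bounded away from $0$, so repeated integration by parts (non-stationary phase) shows that the complementary contribution is $\ord(h^\infty)$ and can be discarded. I would shrink $J$ so that $y(x)=x\til{\phi}(x)^{1/k}$ is defined and smooth there; this is legitimate because letting $x\to0$ in the integral defining $\til{\phi}$ gives $\til{\phi}(0)=|\phi^{(k)}(0)|/k!>0$, hence $y'(0)=(|\phi^{(k)}(0)|/k!)^{1/k}>0$ and $y\colon J\to y(J)$ is a diffeomorphism.

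Next I would perform the change of variable $s=y(x)$. Taylor's formula with integral remainder, together with $\phi'(0)=\dots=\phi^{(k-1)}(0)=0$, gives
$\phi(x)-\phi(0)=\tfrac1{(k-1)!}\int_0^x(x-t)^{k-1}\phi^{(k)}(t)\,dt=\operatorname{sgn}(\phi^{(k)}(0))\,x^k\til{\phi}(x)=\operatorname{sgn}(\phi^{(k)}(0))\,y(x)^k$,
so after the substitution the phase is exactly $\operatorname{sgn}(\phi^{(k)}(0))\,s^k/h$ and the amplitude is $a(x(s))\,x'(s)=a_\phi(s)$, smooth and compactly supported, with the cut-off turning into a cut-off in $s$ equal to $1$ near $s=0$. (As stated the lemma implicitly has $\phi(0)=0$, as in the applications; in general this step only inserts the harmless unimodular constant $e^{i\phi(0)/h}$.) Thus $\int a(x)e^{i\phi(x)/h}\,dx=\int a_\phi(s)e^{i\operatorname{sgn}(\phi^{(k)}(0))s^k/h}\,ds+\ord(h^\infty)$, which is precisely Hörmander's normal form, and \cite[(7.7.30), (7.7.31)]{Ho} then yields \eqref{eq:DSP-o} and \eqref{eq:DSP-e}.

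For a self-contained derivation of the model integral one would Taylor-expand $a_\phi$ at $0$ — to order $N$ for odd $k$, and to order $2N$ for even $k$ since the odd powers of $s$ integrate to zero when $k$ is even — then rescale $s=h^{1/k}u$ and evaluate $\int_{-\infty}^\infty u^l e^{\pm iu^k}\,du$ by the substitution $v=u^k$ together with the Fresnel–Gamma identity $\int_0^\infty v^{z-1}e^{\pm iv}\,dv=\bm{\Gamma}(z)e^{\pm i\pi z/2}$ (valid first for $0<\operatorname{Re}z<1$, then by analytic continuation / contour rotation). For even $k$ this produces the factor $\tfrac2k\bm{\Gamma}(\tfrac{2l+1}{k})e^{\operatorname{sgn}(\phi^{(k)}(0))i\pi(2l+1)/(2k)}$; for odd $k$, after absorbing the sign via $s\mapsto-s$, summing the $u>0$ and $u<0$ contributions and writing $(-1)^l=e^{i\pi l}$ gives $\tfrac2k\,i^l\bm{\Gamma}(\tfrac{l+1}{k})\cos\!\big(\tfrac{(1-(k-1)l)\pi}{2k}\big)$. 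The Taylor remainder is controlled by Lemma~\ref{lem:EstiOsci} (or directly by the same rescale-and-split estimate), giving the claimed errors $\ord(h^{(N+1)/k})$ and $\ord(h^{(2N+1)/k})$. The only delicate point is the bookkeeping: matching the explicit Gamma and phase constants and verifying the remainder is of exactly the stated order; the analytic input is entirely standard, so I would likely just cite \cite{Ho} after recording the change of variables $s=y(x)$.
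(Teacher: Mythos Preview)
Your proposal is correct and aligns with the paper's treatment: the paper does not give a proof of this lemma at all, but simply states it with a citation to \cite[Formulae (7.7.30) and (7.7.31)]{Ho}. Your reduction via the diffeomorphism $s=y(x)$ to the model phase $\operatorname{sgn}(\phi^{(k)}(0))s^k$ followed by invoking H\"ormander is exactly the content behind that citation, so there is nothing to compare; you have merely unpacked the reference, and your final suggestion to ``just cite \cite{Ho} after recording the change of variables'' is precisely what the paper does.
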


\section{Completion of the proof}\label{Sec:Proof2}
In Subsection~\ref{sec:ml-justify}, we show that the matrix $T_{\ope{ex}}$, of which we have computed asymptotic behavior in the previous section, is a representative of the equivalence class of $T$ in Theorem~\ref{thm:Asymptot-T}, and that ends the proof of Theorem~\ref{thm:Asymptot-T}. 
In Subsection~\ref{sec:unitarity}, we prove the unitarity stated in Propositions~\ref{thm:Unitary-SM1} and \ref{thm:Unitary-SM2} of the local scattering matrix by using symmetries of the construction of the solutions.
\subsection{Microlocal connection formula}\label{sec:ml-justify}
In this subsection, we prove the following two propositions. They with Proposition~\ref{thm:T} imply Theorem~\ref{thm:Asymptot-T}. 
\begin{proposition}\label{prop:w-f}
The solutions $w_{1,\ell}$, $w_{2,\ell}$, $w_{1,r}$, and $w_{2,r}$ satisfy the conditions \eqref{eq:mlBasis-f} and \eqref{eq:mlBasis-fab} on the asymptotic behavior of the microlocal solution $f_{1,\ell}$, $f_{2,\ell}$, $f_{1,r}$, and $f_{2,r}$, respectively. They also satisfy
\be\label{eq:ml0}
w_{1,\dir}\equiv0\qtext{microlocally near}\gamma_{2,\dir}\quad\text{and }\ 
w_{2,\dir}\equiv0\qtext{microlocally near}\gamma_{1,\dir}\quad(\dir=\ell,r).
\ee
\end{proposition}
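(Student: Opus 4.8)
The plan is to read everything off the explicit formulae \eqref{eq:Def-w1}--\eqref{eq:DefExSol} together with elementary (non-)stationary phase, and then feed the resulting WKB description into the standard relation between the shape of an oscillating function and its semiclassical wavefront set. By those formulae (whose series converge by Proposition~\ref{prop:ConvS}) each $w_{j,\dir}$ equals $u_j:=e^{-\frac ih\int_0^xV_j}$ times a $\C^2$-valued continuous function $\binom{a_{j,\dir}}{b_{j,\dir}}$ on $I$ --- for the slot $\hat j\neq j$ one absorbs the continuous factor $u_{\hat j}/u_j=e^{\pm\frac ih\int_0^x(V_2-V_1)}$ into the amplitude --- and \eqref{eq:Charactelization} gives $a_{j,\dir}(x_\dir)=1$, $b_{j,\dir}(x_\dir)=0$. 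Thus $w_{j,\dir}$ already has the shape \eqref{eq:mlBasis-f}, and what remains is to pin down the asymptotics \eqref{eq:mlBasis-fab2} (which reduces to \eqref{eq:mlBasis-fab} when $\e_1=\e_2=\e$) of the amplitudes near $\gamma_{j,\dir}$, and to verify \eqref{eq:ml0}.

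The amplitude asymptotics I would obtain on compact subsets $J$ of $I\cap\{\mp x>0\}$ bounded away from $0$, the sign matched to $\dir=\ell,r$. On such a $J$ the difference $V_1-V_2$ does not vanish, every internal phase is $\pm\frac ih\int_0^y(V_2-V_1)$ with non-vanishing derivative along the integration interval, and the phases carried by consecutive factors $K_{j,\dir}U_j$ and $K_{\hat j,\dir}U_{\hat j}$ cancel pairwise. Integrating by parts repeatedly --- legitimate because $U_1,U_2\in C_c^\infty(I)$, so no boundary term survives at $\partial I$, while the only surviving boundary term, at the upper limit $y=x$, carries the phase $\int_x^x(\cdots)=0$ and is therefore non-oscillatory --- shows that, on $J$ and modulo $\ord(h^\infty)$, $(u_1)^{-1}(K_{1,\dir}U_1K_{2,\dir}U_2)^ku_1$ is a smooth function of $x$ all of whose $x$-derivatives are $\ord(h^{-k})$ uniformly on $J$. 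Summing the geometric-type series then gives $a_{1,\dir}=1+\ord(\e_1\e_2h^{-1})=1+\ord(\til{\mu}_1^2)$, while $b_{1,\dir}$, produced by one further application of $-\e_2K_{2,\dir}U_2$, is $\ord(\e_2)$; both $a_{1,\dir}$ and $b_{1,\dir}$, together with all their $x$-derivatives, are bounded uniformly in $h$ on $J$. The same computation with $1$ and $2$ exchanged handles $w_{2,\dir}$ ($b_{2,\dir}=1+\ord(\til{\mu}_1^2)$, $a_{2,\dir}=\ord(\e_1)$), which is exactly \eqref{eq:mlBasis-fab2}. In particular the diagonal amplitude is $1+o(1)\neq0$ for $\til{\mu}_1$ small, so $w_{j,\dir}$ is not microlocally $0$ near $\gamma_{j,\dir}$ and spans the one-dimensional space of microlocal solutions there.

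For \eqref{eq:ml0} I would use that $w_{1,\dir}$, being an exact solution of the (cut-off) system, is a microlocal solution near $\Gamma$, so $\ope{WF}_h(w_{1,\dir})\subset\Gamma_1\cup\Gamma_2$. By the previous step, on every compact $J$ as above one has $w_{1,\dir}=e^{-\frac ih\int_0^xV_1}g$ with $g=\binom{a_{1,\dir}}{b_{1,\dir}}$ smooth and bounded with all derivatives uniformly in $h$; applying $\chi^w(x,hD_x)$ with $\chi$ supported away from $\Gamma_1=\{(x,-V_1(x))\}$ and estimating the oscillatory integral \eqref{WeylQ} by non-stationary phase yields $\ord(h^\infty)$, hence $\ope{WF}_h(w_{1,\dir})\cap(J^\circ\times\R)\subset\Gamma_1$. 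Letting $J$ range over such compacts we get $\ope{WF}_h(w_{1,\dir})\cap\bigl((I\setminus\{0\})\times\R\bigr)\subset\Gamma_1$; since $V_1(x)\neq V_2(x)$ for $x\neq0$ (Condition~\ref{C1}), the set $\gamma_{2,\dir}\subset\Gamma_2\setminus\{\rho_0\}$ is disjoint from $\Gamma_1$, hence from $\ope{WF}_h(w_{1,\dir})$: that is $w_{1,\dir}\equiv0$ microlocally near $\gamma_{2,\dir}$. The assertion for $w_{2,\dir}$ near $\gamma_{1,\dir}$ follows by exchanging $1\leftrightarrow2$. (Equivalently one may invoke the $H_{p_2}$-invariance of $\ope{WF}_h(w_{1,\dir})\cap\Gamma_2$ together with the connectedness of $\Gamma_2\cap(I\times\R)$, reducing \eqref{eq:ml0} to microlocal triviality of $w_{1,\dir}$ at a single point of $\Gamma_2$.)

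The step carrying the real content is the amplitude analysis of the second paragraph: one must be sure that $a_{j,\dir},b_{j,\dir}$ are honest non-oscillatory amplitudes and not WKB functions in disguise. This rests on two features I would make explicit --- the compact support of $U_1,U_2$ inside $I$ (without it the lower endpoint of the integrals contributes a boundary term oscillating with frequency $-V_2$, i.e. a genuine $\Gamma_2$-component of $w_{1,\dir}$, and \eqref{eq:ml0} would be false) and the pairwise cancellation of the internal phases along the Neumann series, which keeps the effective small parameter equal to $\til{\mu}_1^2=\e_1\e_2h^{-1}$ rather than $\e_1\e_2$. A secondary technical point, not a genuine obstacle, is that the integration-by-parts bounds degenerate like $|x|^{n_{\hat j}-m}$ as $x\to0$, so they are uniform only on compacts bounded away from $\rho_0$; this is harmless because $\gamma_{2,\dir}$ itself lies over $\{x\neq0\}$, and --- by the flow-invariance just mentioned --- the microlocal comparison of $w_{j,\dir}$ with $f_{j,\dir}$ may in any case be carried out at any convenient point of $\gamma_{j,\dir}$, e.g. near $x=x_\dir$ where \eqref{eq:Charactelization} fixes $a_{j,\dir}=1$, $b_{j,\dir}=0$ exactly.
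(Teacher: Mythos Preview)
Your proof is correct and, for the asymptotics \eqref{eq:mlBasis-f}--\eqref{eq:mlBasis-fab2}, proceeds exactly as the paper does: one integration by parts on each oscillatory integral, valid on compacta $J\subset I\cap\{\mp x>0\}$ because $V_1-V_2$ does not vanish there, yields $a_{1,\dir}=1+\ord(\til\mu_1^2)$ and $b_{1,\dir}=\ord(\e_2)$.

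For \eqref{eq:ml0} you take a somewhat different route. Your primary argument upgrades the single integration by parts to an iterated one, producing a full classical-symbol expansion of the amplitudes on $J$, and then reads off $\ope{WF}_h(w_{1,\dir})\cap(J\times\R)\subset\Gamma_1$ by non-stationary phase. The paper instead observes that on the set $\{x<\min(\ope{supp}U_1\cup\ope{supp}U_2)\}$ one has $(w_{1,\ell},w_{2,\ell})=\ope{diag}(u_1,u_2)$ \emph{exactly}, so the wavefront set there is visibly contained in $\Gamma_1$; then a single appeal to propagation of singularities along $H_{p_2}$ transports this to all of $\gamma_{2,\ell}$. This is precisely the shortcut you mention parenthetically at the end, and it is cleaner: it avoids the bookkeeping of higher derivatives in the Neumann series and sidesteps the one slip in your write-up --- namely, the assertion that $b_{1,\dir}$ has all $x$-derivatives bounded in $h$ is not literally true once you have absorbed $u_{\hat j}/u_j$ into it; what is true (and sufficient) is that this holds modulo an $\ord(h^\infty)$ oscillatory remainder. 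The paper's argument never needs this refinement. Conversely, your direct approach has the minor advantage of not invoking propagation at all and of making the full symbol structure of $w_{j,\dir}$ explicit on the whole half-interval, not just near $x_\dir$.
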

\begin{proposition}\label{prop:2dim}
Under Conditions~\ref{C1} and \ref{C2}, the space of microlocal solutions near $\rho_0$ is two-dimensional for $\mu_{m,\til{n}}$ small enough.
\end{proposition}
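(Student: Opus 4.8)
The plan is to prove Proposition~\ref{prop:2dim} in two steps: first establish an \emph{a priori} upper bound of $2$ for the dimension of the space of microlocal solutions near $\rho_0$, and then produce two linearly independent microlocal solutions to obtain the matching lower bound. The upper bound is the structurally robust part: any microlocal solution $\psi$ near $\rho_0$ has $\ope{WF}_h(\psi)\subset\Gamma$, so its restriction near each of the four outgoing/incoming arcs $\gamma_{j,\dir}$ lies in the corresponding one-dimensional space spanned by $f_{j,\dir}$, giving coefficients $\alpha_{j,\dir}$ as in \eqref{eq:mlCoeff}. A microlocal solution near $\rho_0$ is in particular a microlocal solution near each $\gamma_{j,\ell}$, so it is determined microlocally on a punctured neighborhood of $\rho_0$ by the pair $(\alpha_{1,\ell},\alpha_{2,\ell})$; the point $\rho_0$ itself carries no extra data because the equation is a first-order ODE system and elliptic regularity away from $\Gamma$ plus propagation along $\Gamma$ pins down $\psi$ microlocally on the whole neighborhood once it is known on the incoming arcs. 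Hence the dimension is at most $2$.

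For the lower bound I would use the exact solutions $w_{1,\ell}$ and $w_{2,\ell}$ constructed in Section~\ref{Sec:Proof1}, whose convergence is guaranteed by Proposition~\ref{prop:ConvS} precisely when $\mu_{m,\til n}$ is small. These are genuine (exact, hence microlocal) solutions of $Pw=0$ on the interval $I$, so in particular they are microlocal solutions near $\rho_0$. It remains to check they are microlocally linearly independent near $\rho_0$: from \eqref{eq:mlBasis-fab2} (to be verified in Proposition~\ref{prop:w-f}) one reads off that $w_{1,\ell}\equiv f_{1,\ell}$ microlocally near $\gamma_{1,\ell}$ while $w_{1,\ell}\equiv0$ near $\gamma_{2,\ell}$, and symmetrically for $w_{2,\ell}$, so their coefficient vectors $(\alpha_{1,\ell},\alpha_{2,\ell})$ are $(1,0)$ and $(0,1)$ up to $\ord(\til\mu_1^2)$ errors, which are linearly independent for $h$ small. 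A nontrivial microlocal combination $c_1 w_{1,\ell}+c_2 w_{2,\ell}\equiv0$ near $\rho_0$ would force $c_1=c_2=0$ after pairing against the cut-offs defining the $f_{j,\ell}$, so the span is genuinely two-dimensional.

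The main obstacle, and the step I would spend the most care on, is the upper-bound argument at the degenerate point $\rho_0$ itself: one must rule out the existence of a ``hidden'' microlocal solution supported only at $\{\rho_0\}$, i.e.\ one with all $\alpha_{j,\dir}=0$ but $\psi\not\equiv0$ near $\rho_0$. The way I would handle this is to show that the kernel of $P$ acting on distributions microlocalized to an arbitrarily small neighborhood of $\rho_0$, modulo those microlocalized to the punctured neighborhood, is trivial. Concretely, multiply $\psi$ by a cut-off $\chi$ equal to $1$ near $\rho_0$ and supported in a small neighborhood; then $P(\chi\psi)=[P,\chi]\psi$, and $[P,\chi]$ has symbol supported away from $\rho_0$, on a region where $\psi$ is already controlled by the $\alpha_{j,\dir}$; if these all vanish, $[P,\chi]\psi=\ord(h^\infty)$, and invertibility of $P$ on $L^2$ up to $\ord(h^\infty)$ microlocally near an isolated characteristic point of this type — which follows because $\chi\psi$ would then be microlocally a solution whose wavefront set is contained in $\{\rho_0\}$, forcing it to be $\ord(h^\infty)$ since a first-order scalar (after the reductions away from $\rho_0$) or $2\times2$ system cannot support a wavefront set at a single point under Condition~\ref{C1} with finite vanishing order — gives $\psi\equiv0$ near $\rho_0$. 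Making this last implication rigorous is where the finite-contact-order hypothesis and the smallness of $\mu_{m,\til n}$ really enter, and I would phrase it via an energy/commutator estimate on the region $\{|x|\lesssim h^{1/(m+1)}\}$ adapted to the scaling of the crossing, mirroring the oscillatory-integral estimates of Lemma~\ref{lem:EstiOsci}.
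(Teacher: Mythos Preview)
Your lower-bound half is fine and matches the paper: the exact solutions $w_{1,\ell},w_{2,\ell}$ constructed in Section~\ref{Sec:Proof1} (convergent for small $\mu_{m,\til n}$ by Proposition~\ref{prop:ConvS}) are microlocally independent by Proposition~\ref{prop:w-f}. The gap is entirely in your upper bound, and you have identified it yourself: ruling out a microlocal solution concentrated at $\{\rho_0\}$. Your proposed fix---an energy or commutator estimate on $\{|x|\lesssim h^{1/(m+1)}\}$ showing that a $2\times2$ system ``cannot support a wavefront set at a single point''---is not a proof; at a genuinely degenerate characteristic point there is no off-the-shelf propagation or subelliptic estimate to invoke, and you have not indicated what quantity you would commute, what positivity you would exploit, or how the finite contact order enters beyond fixing a scale. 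As written this step is a hope, not an argument.

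The paper sidesteps this difficulty completely by a constructive route that exploits the one feature you mention but do not use: $P$ is a first-order ODE system in $x$. Given a microlocal solution $w$ near $\rho_0$, one first manufactures a global quasi-mode $\psi$ on an interval $I'$ with $\|P\psi\|_{L^2(I')}=\ord(h^\infty)$ and $\psi\equiv w$ near $\rho_0$; this is done by gluing a microlocalized piece $\chi_0(x)\pphi(hD_x)\chi w$ near $\rho_0$ to genuine exact solutions $w_\ell,w_r$ matching $w$ on the left and right arcs, via a partition $\chi_-+\chi_0+\chi_+=\chi$. Then Cramer's rule writes $\psi$ as $c_1(x)w_{1,\ell}+c_2(x)w_{2,\ell}$ with $c_j(x)=\cW(\cdot)/\cW(w_{1,\ell},w_{2,\ell})$, and the quasi-mode property forces $c_j(x)=c_j(0)+\ord(h^\infty)$, so $\psi$ differs from the exact solution $\til w=c_1(0)w_{1,\ell}+c_2(0)w_{2,\ell}$ by $\ord(h^\infty)$. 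This simultaneously gives the upper bound and makes the ``hidden solution'' issue evaporate: no microlocal estimate at $\rho_0$ is ever needed, only the ODE Wronskian calculus.
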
 
\begin{proof}[Proof of Proposition~\ref{prop:w-f}]
We prove the proposition only for $w_{1,\ell}$. Proofs for the others are parallel. 
By construction, $w_{1,\ell}$ has the following form:
\ben
w_{1,\ell}=\begin{pmatrix}u_1+\e_1\e_2h^{-1}\beta(x;\e_1,\e_2,h)u_1(x)\\\e_2h^{-1}\alpha(x;\e_1,\e_2,h)u_2(x)\end{pmatrix},
\een
where $\alpha$ and $\beta$ are absolutely convergent infinite sums of the functions of the form
\ben
\frac1h\int_{x_\ell}^xf(y)\exp\left(\pm\frac ih\int_0^y(V_1(t)-V_2(t))dt\right)dy
\een
with $f\in C^1(I)$ such that $f$ and $f'$ are  bounded as $h\to0^+$. 
By an integration by parts, for any 
$\dl>0$, we have
\ben
\sup_{x_\ell\le x\le-\dl}\frac1h\left|\int_{x_\ell}^xf(y)\exp\left(\pm\frac ih\int_0^y(V_1(t)-V_2(t))dt\right)dy\right|
\le C\left(\sup_I\left(\left|f\right|+\left|f'\right|\right)\right)\le C,
\een
since $V_1-V_2$ does not vanish on $[x_\ell,-\dl]$. This implies that $w_{1,\ell}$ satisfies \eqref{eq:mlBasis-f} and \eqref{eq:mlBasis-fab}.
 
We then show that $w_{1,\ell}$ is microlocally 0 near $\gamma_{2,\ell}$. Again by construction, its behavior on $\{x<\min(\ope{supp}U_1\cup\ope{supp}U_2)\}$ is explicitly given by
\be
(w_{1,\ell}(x),w_{2,\ell}(x))=\ope{diag}(u_1(x),u_2(x)). 
\ee
This shows that in $\{(x,\xi)\in I\times\R;\,x<\min(\ope{supp}U_1\cup\ope{supp}U_2)\}$, the semiclassical wavefront set $\ope{WF}_h(w_{1,\ell})$ is a subset of $\gamma_{l,1}$. 
Then the invariance of the semiclassical wavefront set along Hamiltonian curve implies \eqref{eq:ml0}.
\end{proof}

\begin{figure}
\centering
\includegraphics[bb=0 0 349 227, width=8cm]{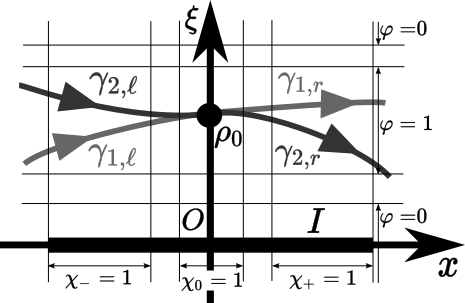}
\caption{Cut-off functions in the Proof of Proposition~\ref{prop:2dim}}
\label{Fig:CO}
\end{figure}
\begin{proof}[Proof of Proposition~\ref{prop:2dim}]
Let $w$ be a microlocal solution to the equation $P w= 0$ near  the crossing point $\rho_0=(0,-V_0)$. We construct an exact solution $\til{w}$ such that 
\ben
\til{w}\equiv w\quad\text{microlocally near }\rho_0.
\een
Then Proposition~\ref{prop:2dim} follows.
Before that, we first construct a quasi-mode $\psi$ such that
\ben
\|P\psi\|_{L^2(I')}=\ord(h^\infty),\quad
\psi\equiv w\quad\text{microlocally near }\rho_0,
\een
where $I'$ is a neighborhood of $x=0$ on which $\chi=1$ identically.
Take functions $\chi_-,\chi_0,\chi_+,\pphi\in C^\infty(\R;[0,1])$ such that $\pphi(\xi)=1$ near $\xi=-V_0$,
\be
\chi_-+\chi_0+\chi_+=\chi\ \text{ on }\R,\quad
\chi_0=1\ \text{ near }x=0,\quad
\chi_\pm=0\ \text{ for }\mp x>0.
\ee
We moreover can suppose that the support of $\chi_0$ and that of $\pphi$ are compact, and satisfy
\ben
(\ope{supp}\chi_0\times\ope{supp}\pphi)\cap\ope{WF}_h(w)\subset \Gamma\cap\{x\in I\}\subset I\times\{\pphi(\xi)=1\}.
\een
We also take solutions $w_l$ and $w_r$ such that
\ben
w\equiv w_\dir\qtext{microlocally near}\gamma_{1,\dir}\cup\gamma_{2,\dir}\quad(\dir=\ell,r).
\een
Such a solution $w_\dir$ exists since the space of microlocal solutions on each $\gamma_{j,\dir}$ is one dimensional. Put
\be
\psi(x):=\chi_0(x)\pphi(hD_x)\chi(x)w(x)+\chi_-(x) w_l+\chi_+(x)w_r.
\ee
This satisfies $\psi\equiv w$ microlocally near $\rho_0$. 
Let us show that this is a quasi-mode.
We have
\ben
P\psi=\chi_0[P,\pphi(hD_x)]\chi w+\chi_0\pphi(hD_x)P\chi w+[P,\chi_0]\pphi(hD_x)\chi w+[P,\chi_-]w_l+[P,\chi_+]w_r.
\een
We see that the first two terms are $\ord(h^\infty)$ since the essential support of the Weyl symbol of the pseudodifferential operator $\chi_0[P,\pphi(hD_x)]$ and of $\chi_0\pphi(hD_x)$ are compact, and does not intersect with the semiclassical wavefront set of $\chi w$ and of $P\chi w$, respectively. By the equality $[P,\chi_0]=-([P,\chi_-]+[P,\chi_+])$, the other terms are sum of the following two terms:
\begin{align*}
&[P,\chi_-](w_l-\pphi(hD_x)\chi w)=[P,\chi_-](1-\pphi(hD_x))w_l+\ord(h^\infty),\\
&[P,\chi_+](w_r-\pphi(hD_x)\chi w)=[P,\chi_+](1-\pphi(hD_x))w_r+\ord(h^\infty).
\end{align*}
Since $w_l$ and $w_r$ are exact solutions to $P w_\dir=0$ and microlocally zero away from $\Gamma$, the above two terms are also $\ord(h^\infty)$.

By applying the Cramer's rule, we have
\ben
\psi(x)=\frac1{\cW(w_{1,\ell},w_{2,\ell})(x)}\left(\cW(\psi,w_{2,\ell})(x)w_{1,\ell}(x)+\cW(w_{1,\ell},\psi)(x)w_{2,\ell}(x)\right).
\een
Since $\psi$ is a quasi-mode, we have
\ben
\frac{\cW(\psi,w_{2,\ell})(x)}{\cW(w_{1,\ell},w_{2,\ell})(x)}=\frac{\cW(\psi,w_{2,\ell})(0)}{\cW(w_{1,\ell},w_{2,\ell})(0)}+\ord(h^\infty),\quad
\frac{\cW(w_{1,\ell},\psi)(x)}{\cW(w_{1,\ell},w_{2,\ell})(x)}=\frac{\cW(w_{1,\ell},\psi)(0)}{\cW(w_{1,\ell},w_{2,\ell})(0)}+\ord(h^\infty),
\een
and consequently
\be
\til{w}=\psi+\ord(h^\infty)\qtext{with}
\til{w}:=\frac{\cW(\psi,w_{2,\ell})(0)}{\cW(w_{1,\ell},w_{2,\ell})(0)}w_{1,\ell}+
\frac{\cW(w_{1,\ell},\psi)(0)}{\cW(w_{1,\ell},w_{2,\ell})(0)}w_{2,\ell}.
\ee
This $\til{w}$ is what we aimed to construct.
\end{proof}
\subsection{Unitarity of the local scattering matrix}\label{sec:unitarity}
We here prove Propositions~\ref{thm:Unitary-SM1} and \ref{thm:Unitary-SM2}.
They are obtained by substituting 
 \eqref{eq:Hermite1} and \eqref{eq:Hermite2} into the definition of the matrix $A$.

We have $u_1=\overline{u_-}/u_+$ and $u_2=u_-/u_+$ with
\ben
u_{\pm}(x):=\exp\left(\frac{i}{2h}\int_0^x(V_1(t)\pm V_2(t))dt\right).
\een
We also have the symmetry
\be
u_+(x)K_{1,\dir}U_1f(x)=\overline{u_-(x)}\int_{x_\dir}^x \frac{W(y)}{u_1(y)}f(y)dy
=\overline{u_+(x)K_{2,\dir}U_2\overline{f}(x)},
\ee
under \eqref{eq:Hermite1}, and similarly
\be
u_+(x)K_{1,\dir}U_1f(x)
=-\overline{u_+(x)K_{2,\dir}U_2\overline{f}(x)},
\ee
under \eqref{eq:Hermite2}. 
By induction, we obtain
\ben
u_+w_{2,\dir}=\left\{
\begin{aligned}
&\begin{pmatrix}0&-1\\1&0\end{pmatrix}\overline{u_+w_{1,\dir}}
&&\text{under \eqref{eq:Hermite1},}\\
&\begin{pmatrix}0&1\\1&0\end{pmatrix}\overline{u_+w_{1,\dir}}
&&\text{under \eqref{eq:Hermite2},}
\end{aligned}\right.
\een
for $\dir=\ell,r$. This implies that $T$, more precisely one of its representative $T_{\ope{ex}}$, has the form
\be
T=\begin{pmatrix}\tau_1&-\overline{\tau_2}\\\tau_2&\overline{\tau_1}\end{pmatrix},\qquad
T=\begin{pmatrix}\tau_1&\overline{\tau_2}\\\tau_2&\overline{\tau_1}\end{pmatrix},
\ee
respectively under \eqref{eq:Hermite1}, \eqref{eq:Hermite2}. This with \eqref{eq:DetT}: $\det T=1$ directly implies that $T$ is unitary \eqref{eq:Hermite1}. By substituting this into \eqref{eq:Def-tilT}, we also obtain the unitarity of
\be
\til{T}=\frac1{\overline{\tau_1}}
\begin{pmatrix}
1&\overline{\tau_2}\\
-\tau_2&1
\end{pmatrix}.
\ee

\section*{Acknowledgements}
This work is supported by Grant-in-Aid for JSPS Fellows Grant Number JP22KJ2364. The author appreciates professors S.~Fujii\'e and T.~Watanabe for helpful discussions. A part of discussion was done during the conference held in the Research Institute for Mathematical Sciences, an International Joint Usage/Research Center located in Kyoto University.

\end{document}